\let\origvec\vec
\documentclass{llncs}
\pagestyle{plain}

\let\vec\origvec
\usepackage{amsmath}
\usepackage{amssymb}
\usepackage{framed}
\usepackage{graphicx}
\usepackage{etoolbox}
\usepackage{pdfpages}
\usepackage{caption}

\usepackage{amsthm}

\usepackage[utf8]{inputenc}
\usepackage[T1]{fontenc}
\usepackage{parskip} 
\usepackage{physics}



%
%

\usepackage[matrix,frame,arrow]{xy}
\usepackage{amsmath}
\newcommand{\qw}[1][-1]{\ar @{-} [0,#1]}
\newcommand{\multigate}[2]{*+<1em,.9em>{\hphantom{#2}} \qw \POS[0,0].[#1,0];p !C *{#2},p \save+LU;+RU **\dir{-}\restore\save+RU;+RD **\dir{-}\restore\save+RD;+LD **\dir{-}\restore\save+LD;+LU **\dir{-}\restore}
\newcommand{\ghost}[1]{*+<1em,.9em>{\hphantom{#1}} \qw}


\newcommand{\rstick}[1]{*!L!<-.5em,0em>=<0em>{#1}}
\newcommand{\lstick}[1]{*!R!<.5em,0em>=<0em>{#1}}


\newcommand{\Qcircuit}{\xymatrix @*=<0em>}


\spnewtheorem{protocol}[theorem]{Protocol}{\bfseries}{\itshape}

\usepackage{tikz}
\usepackage{tikzpeople}
\usepackage{stackengine}

\usepackage{setspace}
\usepackage{pdfpages}
\usepackage{marvosym}

\title{The RGB No-Signalling Game}

\author{
Xavier Coiteux-Roy\thanks{Supported in part by SNF and FRQNT.}
and
Claude Cr\'epeau\thanks{Supported in part by FRQNT (INTRIQ) and NSERC (CryptoWorks21 and Discovery grant program).}
}
\institute{
Universit\`a della Svizzera italiana, Lugano, Switzerland.
{xavier.coiteux.roy@usi.ch}\\
McGill University, Montr\'eal, Qu\'ebec, Canada.
{crepeau@cs.mcgill.ca}
}


\begin{document}

\maketitle


\addtocounter{footnote}{1}
\begin{abstract}

Introducing the simplest of all No-Signalling Games: the RGB Game where two verifiers interrogate two provers, Alice and Bob, far enough from each other that communication between them is too slow to be possible. Each prover may be independently queried one of three possible colours: Red, Green or Blue. Let $a$ be the colour announced to Alice and $b$ be announced to Bob. To win the game they must reply colours $x$ (resp. $y$) such that $a \neq x \neq y \neq b$.

This work focuses on this new game mainly as a pedagogical tool for its simplicity but also because it triggered us to introduce a new set of definitions for reductions among multi-party probability distributions and related {\em locality classes}. We show that a particular winning strategy for the RGB Game is equivalent to the PR-Box of Popescu-Rohrlich and thus No-Signalling. Moreover, we use this example to define No-Signalling in a new useful way, as the intersection of two natural classes of multi-party probability distributions called one-way signalling. We exhibit a quantum strategy able to beat the classical local maximum winning probability of 8/9 shifting it up to 11/12.
Optimality of this quantum strategy is demonstrated using the standard tool of semidefinite programming.

\end{abstract}


\section{The Game}\label{Game}

Claude started this research trying to find the simplest example he could think of to illustrate
multi-party distributions achievable via entanglement and No-Signalling in general.
His interest started from the following question on Quora:
``Could someone explain quantum entanglement to me like I'm 5 years old?'' 
Jon Hudson \cite{HUDSON18}, a former Stanford QM student, had given an answer involving friends choosing to have
pizza (or not) on the Moon and on Earth but he did not quite come up with a crisp No-Signalling
situation. Claude cooked up the RGB example after reading Jon's answer.

The canonical examples in this area are the Magic Square Game \cite{PhysRevLett.65.3373,PERES1990107} and the
so-called PR-box \cite{PR94} of Popescu-Rohrlich, both of which require some basic notions of arithmetics to
be introduced, or at least some basic logic as a common background. The purpose now is to present
an example so simple that even a five year old would understand it!

The RGB game is as follows:
\begin{quote}
``
Two people, Alice and Bob, play a game with friends Albert and Boris. Alice and Albert are on the moon,
while Bob and Boris stay on earth. Albert and Boris each independently picks at random a colour out of three possibilities:
Red, Green or Blue, and locally tells it to Alice or Bob.

Right away Alice and Bob choose a colour different from the one provided by their local counterpart.
For instance, if Albert tells Green to Alice, she may choose Red or Blue, while if Boris tells Red to Bob, he may choose Blue or Green.

Alice and Bob win the game if they never answer the same colour, either Red-Blue, Red-Green or Blue-Green in the example above.
''
\end{quote}

Figure 1 summarizes the input/output relation that Alice and Bob must satisfy.
$a$ is the colour given to Alice and $b$ is the colour given to Bob.
Their answers are $x$ and $y$ respectively. The condition they are trying to
achieve is $a \neq x \neq y \neq b$.

\begin{figure}[h!]

\centering
  \mbox{\Qcircuit @C=1em @R=.7em {
      \lstick{a}  \ar[r] & \multigate{1}{{\mathcal RGB}} & \rstick{b} \ar[l] \\
      \lstick{x}   & \ghost{{\mathcal RGB}} \ar[r]\ar[l] & \rstick{y}
    }}
\caption{an ${\mathcal RGB}$-box such that $a \neq x \neq y \neq b$}
  \label{RGB}
\end{figure}
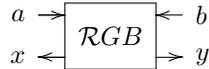

Such boxes are a standard way of representing the possible behaviours of Alice and Bob.
Indeed we can think of this box as a channel precisely describing the distribution
of $x,y$ given fixed values of $a,b$. The box of Figure 1 does not specify the
probabilities exactly and thus the name of the box is in calligraphic letters representing
the set of all the distributions that satisfy the given conditions. There are many distinct
ways of fulfilling the conditions of the game and many distributions that will win the
game 100\% of the time.

\subsection{Winning strategies}

Let's first consider a deterministic strategy for Alice and Bob's behaviour as described
by the box of Figure \ref{RGB0}.

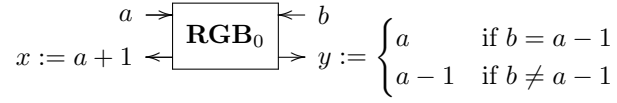
\begin{figure}[h!]

\centering
  \mbox{
  \Qcircuit @C=1em @R=.7em {
      \lstick{a}  \ar[r] & \multigate{1}{{\bf RGB}_0} & \rstick{b} \ar[l] \\
      \lstick{x:=a+1}   & \ghost{{\bf RGB}_0} \ar[r]\ar[l] & \rstick{y:= \begin{cases}
    a     & \text{if $ b  =    a-1 $} \\
    a-1  & \text{if $ b\neq a-1 $}
\end{cases} }
    }
}
\vspace{0,2cm}
\caption{a deterministic ${\bf RGB}_0$-box}
  \label{RGB0}
\end{figure}
In  this example we assume the colours are labelled $0,1$ or $2$ and that arithmetic operations are performed
modulo 3. 
When $a$ and $b$ are the same colour $u$ it produces
$$a=u, ~x=u+1, ~y=u-1, ~b=u.$$
The values $u+1$ and $u-1$ are the other two colours, distinct from $u$.
However, when $a$ and $b$ are the distinct colours $u,v$ it produces either
$$a=u, ~x=u+1, ~y=u, ~b=v$$
when the third colour is $u+1 = v-1$ or
$$a=u, ~x=u+1, ~y=u-1, ~b=v$$
when the third colour is $u-1 = v+1$.

This deterministic strategy defines completely the probability distribution of
the outputs $x,y$ given $a,b$: probability of $(x,y|a,b)$ is zero except when
$~x=a+1$ and $~y = \begin{cases}
    a     & \text{if $ b  =    a-1 $} \\
    a-1  & \text{if $ b\neq a-1 $}
\end{cases}
$ in which case it is precisely one.
Therefore we name this box ${\bf RGB}_0$ with bold characters because it precisely
defines a unique probability distribution $P_{x,y|a,b}$. This box achieves the prescribed
condition $a \neq x \neq y \neq b$ in a unique deterministic way for each $a,b$.

After complete examination of this condition one realizes that when $a=b$ is a single
colour $u$ the conditions can be satisfied in exactly two ways
$$a=u, ~x=u\pm 1, ~y=u\mp 1, ~b=u$$
whereas when $a$ and $b$ are distinct
colours $u,v$ the conditions can be satisfied in exactly three ways
$$a=u, ~x=v, ~y=u, ~b=v$$
$$a=u, ~x=u\pm 1, ~y=v\pm 1, ~b=v.$$

From this we conclude that out of the 9 possible $a,b$ pairs, three of them
($a=b$) may have two solutions and six of them
($a\neq b$) may have three solutions. This yields a total of $2^{3} 3^{6} = 18^{3} = 5832$ distinct deterministic winning
strategies. The above ${\bf RGB}_0$ strategy is only one of these.

We can completely parametrize all the winning strategies as a function of 15 real parameters
$p_{0}, p_{1}, p_{2}, p_{01}, p_{02}, p_{10}, p_{12}, p_{20}, p_{21}, q_{01}, q_{02}, q_{10}, q_{12}, q_{20}, q_{21}$ in the interval $[0,1]$ such that $p_{uv}+q_{uv}\leq 1$
as follows
\begin{equation}\label{uu}P_{u+1,u-1|u,u} = p_{u} \text{ and } ~P_{u-1,u+1|u,u} = 1-p_{u}, ~ \text{for } u\in \{ 0,1,2 \}\end{equation}
\begin{equation}\label{uv}P_{w,u|u,v} = p_{uv}, ~P_{v,w|u,v} = q_{uv} \text{ and } ~P_{v,u|u,v} = 1-p_{uv}-q_{uv}, ~ \text{for } \{ u,v,w \} = \{ 0,1,2 \}.\end{equation}
All the winning strategies to this game are among these probability distributions.
They are all the valid convex combinations of the $5832$ distinct deterministic winning
strategies.

The deterministic strategy ${\bf RGB}_0$ of Figure \ref{RGB0} is the special case
$$p_{0}=p_{1}=p_{2}=p_{02}=p_{20}=q_{01}=q_{10}=q_{12}=q_{21}=1, \text{ and } p_{01}=p_{10}=p_{12}=p_{21}=q_{02}=q_{20}=0.$$

\newcommand{\RGRB}[0]{ \text{$\mathbf{R  \! \frac{GR}{BG} \! B}$} }

The rest of this paper is going to focus on exactly one of these strategies with a very remarkable property:
it {\em does not require} Alice and Bob to signal to implemented it (whereas all the others actually do).
This strategy is going to be named
$\RGRB$\footnote{The name is a reminder that this strategy has the feature that whenever $a$ and $b$ are distinct, $axyb$ is $abcb$ or $acab$ ($c$ being the third colour) but never $abab$.
\RGRB is a combined string of types ${a  \! \frac{ca}{} \! b}$, ${a  \! \frac{}{bc} \! b}$. }
and is specified by the parameters
$$p_{0}=p_{1}=p_{2}=p_{01}=p_{10}=p_{02}=p_{20}=p_{12}=p_{21}=q_{01}=q_{10}=q_{02}=q_{20}=q_{12}=q_{21}=\frac{1}{2}.$$

\begin{figure}[h!]
\centering
  \mbox{\Qcircuit @C=1em @R=.7em {
      \lstick{a}  \ar[r] & \multigate{1}{ {{ \RGRB}} } & \rstick{b} \ar[l] \\
      \lstick{x}   & \ghost{{\RGRB}} \ar[r]\ar[l] & \rstick{y}
    }}
\caption{The ${{\RGRB}}$-box such that $a \neq x \neq y \neq b$, \newline \!\!\! and $(x,y) \neq (b,a)$, uniformly among solutions}
  \label{RGB}
\end{figure}
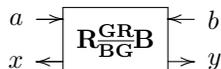

In Figure 3, ${{\RGRB}}$ is made precise by enforcing extra conditions on top of $a \neq x \neq y \neq b$. We force $P_{v,u|u,v} = 0$ by
adding $(x,y) \neq (b,a)$. Uniformity finally imposes that all the remaining non-zero probabilities be exactly $\frac{1}{2}.$

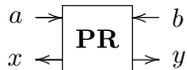
\begin{figure}[h!]
\centering
  \mbox{\Qcircuit @C=1em @R=.7em {
      \lstick{a}  \ar[r] & \multigate{1}{{\bf PR}} & \rstick{b} \ar[l] \\
      \lstick{x}   & \ghost{{\bf PR}} \ar[r]\ar[l] & \rstick{y}
    }}
\caption{a {\bf PR}-box satisfying the CHSH condition,\newline that $a \wedge b = x \oplus y$, uniformly among solutions}
  \label{nlbox}
\end{figure}

\subsection{Our Results}
The contributions of the paper are
\begin{enumerate}

\item Novel notion of reducibility among strategies
\item Novel definitions of basic notions such as locality, signalling, one-way signalling and no-signalling
\item A proof that our notion of no-signalling is equivalent to the generally accepted one
\item A proof of equivalence between ${{\RGRB}}$ and the well-known Popescu-Rohrlich Non-Local (yet No-Signalling) {\bf PR}-box (see Figure 4) Implying ${{\RGRB}}$ is also complete for the set of No-Signalling distributuions
\item A proof that ${{\RGRB}}$ is the ONLY No-Signalling distribution winning the RGB game
\item Quantum strategy with winning probability 11/12, better than the best local strategy at 8/9
\item A proof of optimality of this quantum strategy using semidefinite programming
\item Some related open problems

\end{enumerate}


\section{Definitions}
In this section we solely focus on the two-party single-round games and strategies that are sufficient
to discuss and analyze the strategies for the RGB game.
Definitions and proofs for complete generalizations to multi-party multi-round games and strategies
will appear in a forthcoming paper with co-authors Adel Magra and Nan Yang.

\subsection{Strategies: Two-Party Channels}

\subsubsection{Games:}

Let $V$ be a predicate on $A\times B\times X\times Y$ (for some finite sets $A, B, X,$ and $Y$) and let $\pi$ be a probability distribution on $A\times B$.
Then $V$ and $\pi$ define a (single-round) game $G$ as follows: A pair of questions $(a,b)$ is randomly chosen according to distribution $\pi$, and $a\in A$ is sent to Alice
and $b\in B$ is sent to Bob. Alice must respond with an answer $x\in X$ and Bob with an answer $y\in Y$.
Alice and Bob win if $V$ evaluates to 1 on $(a,b,x,y)$ and lose otherwise.

\subsubsection{Strategies:} A strategy for Alice and Bob is simply a probability distribution $P_{(x,y | a,b)}$ describing exactly how they will answer $(x,y)$ on every
pair of questions $(a,b)$. We now breakdown the set of all possible strategies for Alice and Bob according to their {\em locality}.

\subsubsection{Deterministic and Local Strategies:} A strategy $P_{(x,y | a,b)}$ is {\em deterministic} if there exists functions $f_{A}:A\rightarrow X, f_{B}:B\rightarrow Y$ such that
$$P_{(x,y | a,b)} = 
\begin{cases}
    1  & \text{if $x=f_{A}(a)$ and $y=f_{B}(b)$} \\
    0  & \text{otherwise}
\end{cases}.$$
A deterministic strategy corresponds to the situation where Alice and Bob agree on their individual actions before any knowledge of the values $a,b$ is provided to them.
In this case they use only their own input to determine their individual output.

A strategy $P_{(x,y | a,b)}$ is {\em local} if there exists a finite set $R$ and functions $f_{A}:A\times R\rightarrow X, f_{B}:B\times R\rightarrow Y$ such that
$$P_{(x,y | a,b)} = \frac{ |\{ r\in R : x=f_{A}(a,r) \text{ and } y=f_{B}(b,r) | }{ |R|}.$$
A local strategy corresponds to the situation where Alice and Bob agree on a deterministic strategy selected uniformly among $|R|$ such possibilities.
The choice $r$ of Alice and Bob's strategy, and the choice of inputs $(a,b)$ provided to Alice and Bob are generally agreed to be statistically independent random variables.

\subsection{Local Reducibility}
We now turn to the notion of locally reducing a strategy to another, that is how Alice and Bob limited to local strategies but equipped
with a particular (not necessarily local) strategy $U'$ are able to achieve another particular (not necessarily local) strategy $U$. For this purpose
we introduce a notion of distance between strategies in order to analyze strategies that are approaching each other asymptotically.

\subsubsection{Distances between Strategies:} Several distances could be selected here as long as their meaning as it approaches zero are the same. In the definitions below,
$U,U'$ are strategies and ${\mathcal U'}$ is a finite set of strategies. 

\begin{definition}
$ |U,U'| = \sum_{a,b,x,y} | P_{U}(x,y | a,b) - P_{U'}(x,y | a,b) | $
\end{definition}

\begin{definition}
$ |U,{\mathcal U'}| = \min_{U'\in {\mathcal U'} } |U,U'| $
\end{definition}

\subsubsection{Local extensions of Strategies:} For natural integer $n$, we define the set $\mbox{LOC}^{n}(U)$ of strategies that are local extensions (of order $n$)
of $U$ to be all the strategies Alice and Bob can achieve using local strategies where strategy $U$ may be used up to $n$ times as sub-routine
calls\footnote{This is done by selecting functions $f_{A}^{1}:A\times R\rightarrow A, ~f_{A}^{2}:A^{2}\times R\rightarrow A,..., ~f_{A}^{n}:A^{n}\times R\rightarrow X$ to determine the input of each sub-routine call based on input $a$ and previous inputs/outputs.}.

\begin{definition}
We say that $U'$ Locally Reduces to $U$ ($U' \leq_{\mbox{\scriptsize\em LOC}} U$) iff $\lim_{n\rightarrow \infty} | U' , \mbox{\em LOC}^{n}(U) | =0$.
\end{definition}

\begin{definition}
We say that $U'$ is Locally Equivalent to $U$ ($U' =_{\mbox{\scriptsize\em LOC}} U$) iff $U' \leq_{\mbox{\scriptsize\em LOC}} U$ and $U \leq_{\mbox{\scriptsize\em LOC}} U'$.
\end{definition}

Note: a similar notion of reducibility has been previously defined by Dupuis, Gisin, Hasidim, M\'ethot, and Pilpel \cite{dghmp07} but without taking the limit to infinity. In their model they have previously
showed that $n$ instances of the PR-box modulo $p$ cannot be used to replicate exactly the PR-box modulo $q$, for any distinct primes $p,q$. 
However, Forster and Wolf \cite{PhysRevA.84.042112} have previously proved that {\bf PR} is complete for No-Signalling distributions under a similar (asymptotic) definition.

\subsection{Locality}
We now define the lowest of the locality classes ${\mathbb{LOC}}$. We could define it directly from the notion of local strategies as defined above, but for analogy with the other classes
we later define, ${\mathbb{LOC}}$ is defined as all those strategies locally reducible to a {\em complete} strategy we call $\mathbf{ID}$ (see Figure \ref{ID}). Of course, any strategy is complete for this class.
\begin{figure}[h!]

\centering
  \mbox{\Qcircuit @C=1em @R=.7em {
      \lstick{a}  \ar[r] & \multigate{1}{\mathbf{ID}} & \rstick{b} \ar[l] \\
      \lstick{a}   & \ghost{\mathbf{ID}} \ar[r]\ar[l] & \rstick{b}
    }}
\caption{an $\mathbf{ID}$-box}
  \label{ID}
\end{figure}

\begin{definition}
${\mathbb{LOC}} = \{ U | U \leq_{\mbox{\scriptsize\em LOC}} \mathbf{ID} \}$
\end{definition}

Note: ${\mathbb{LOC}}$ is the class of strategies that John Bell \cite{BELL64} considered as classical hidden-variable theories that he compared to entanglement.
It is also the class of strategies that BenOr, Goldwasser, Kilian and Wigderson \cite{BGKW88} chose to define classical Provers in Multi-Provers Interactive Proof Systems.

\subsection{One-Way Signalling}
We now turn to One-Way Signalling which allows communication from one side to the other. We name the directions arbitrarily Left and Right.
We define ${\mathbf R} \text{-} {\mathbb{SIG}}$ (resp. ${\mathbf L} \text{-} {\mathbb{SIG}}$) as all those strategies locally reducible to a {\em complete} strategy we call ${\mathbf R} \text{-} {\mathbf{SIG}}$ (see Figure \ref{rSIG})
(resp. ${\mathbf L} \text{-} {\mathbf{SIG}}$ (see Figure \ref{lSIG})). These classes are useful to define what it means for a strategy to {\em signal} as well as the notion of {\em No-Signalling} strategies.

\begin{figure}[h!]

\centering
  \mbox{\Qcircuit @C=1em @R=.7em {
      \lstick{a}  \ar[r] & \multigate{1}{ {\mathbf R} \text{-} {\mathbf{SIG}} } & \rstick{b} \ar[l] \\
      \lstick{a}   & \ghost{ {\mathbf R} \text{-} {\mathbf{SIG}} } \ar[r]\ar[l] & \rstick{a}
    }}
\caption{an $\mathbf{R} \text{-} {\mathbf{SIG}}$-box}
  \label{rSIG}
\end{figure}

\begin{definition}
${\mathbf R} \text{-} {\mathbb{SIG}} = \{ U | U \leq_{\mbox{\scriptsize\em LOC}} {\mathbf R} \text{-} {\mathbf{SIG}} \}$
\end{definition}

\begin{definition}
We say that $U$ Right Signals (is ${\mathbf R} \text{-} {\mathbb{SIG}}$-verbose\footnote{We define the notion of $\mathbb{L}$-verbose in analogy to $\mathbb{NP}$-hard: it means ``as verbose as any distribution in locality class
$\mathbb{L}$''. In consequence, a distribution $U$ is $\mathbb{L}$-complete if $U \in \mathbb{L}$ and $U$ is $\mathbb{L}$-verbose.}) iff ${\mathbf R} \text{-} {\mathbf{SIG}} \leq_{\mbox{\scriptsize\em LOC}} U$.
\end{definition}

\begin{figure}[h!]

\centering
  \mbox{\Qcircuit @C=1em @R=.7em {
      \lstick{a}  \ar[r] & \multigate{1}{{\mathbf L} \text{-} {\mathbf{SIG}}} & \rstick{b} \ar[l] \\
      \lstick{b}   & \ghost{{\mathbf L} \text{-} {\mathbf{SIG}}} \ar[r]\ar[l] & \rstick{b}
    }}
\caption{an ${\mathbf L} \text{-} {\mathbf{SIG}}$-box}
  \label{lSIG}
\end{figure}

\begin{definition}
${\mathbf L} \text{-} {\mathbb{SIG}} = \{ U | U \leq_{\mbox{\scriptsize\em LOC}} {\mathbf L} \text{-} {\mathbf{SIG}} \}$
\end{definition}

\begin{definition}
We say that $U$ Left Signals (is ${\mathbf L} \text{-} {\mathbb{SIG}}$-verbose) iff ${\mathbf L} \text{-} {\mathbf{SIG}} \leq_{\mbox{\scriptsize\em LOC}} U$.
\end{definition}

\begin{definition}
We say that $U$ Signals iff $U$ Right Signals or Left Signals.
\end{definition}

We prove a first result that is intuitively obvious. We show that the complete strategy ${\mathbf R} \text{-} {\mathbf{SIG}}$ cannot be approximated in ${\mathbf L} \text{-} {\mathbb{SIG}}$
and the other way around.

\begin{theorem}\label{LR-impossible}
${\mathbf R} \text{-} {\mathbf{SIG}} \not\in {\mathbf L} \text{-} {\mathbb{SIG}}$ and
${\mathbf L} \text{-} {\mathbf{SIG}} \not\in {\mathbf R} \text{-} {\mathbb{SIG}}$.
\end{theorem}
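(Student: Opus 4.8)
The plan is to isolate the one physical feature that separates the two resources — the direction in which a party's marginal output is permitted to depend on the \emph{other} party's input — and to show that this feature is an invariant of the entire local-reduction machinery. Call a strategy $U$ \emph{no-right-signalling} if Bob's marginal output is insensitive to Alice's input, i.e.
\[
\sum_{x} P_U(x,y\mid a,b) \;=\; \sum_{x} P_U(x,y\mid a',b)
\qquad\text{for all } a,a',b,y .
\]
First I would record the two easy endpoints: ${\mathbf L}\text{-}{\mathbf{SIG}}$ outputs $b$ on Bob's wire regardless of $a$, so it is no-right-signalling, whereas ${\mathbf R}\text{-}{\mathbf{SIG}}$ outputs $a$ on Bob's wire and so violates the displayed identity (fix $b$ and $y$ and take $a\neq a'$). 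The whole theorem then reduces to showing that no-right-signalling propagates through the operator $\mbox{LOC}^{n}(\,\cdot\,)$.

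The heart of the argument is a closure lemma: if $U$ is no-right-signalling then every strategy in $\mbox{LOC}^{n}(U)$ is no-right-signalling, for every $n$. I would prove this by induction on the number $k\le n$ of sub-routine calls, fixing the shared randomness $r$ (which is independent of $(a,b)$) and tracking Bob's \emph{transcript} $T_B^{k}$ of box-outputs received so far; his box-inputs are deterministic functions of $(b,r,T_B^k)$. The induction hypothesis is that, conditioned on $(b,r)$, the law of $T_B^{k}$ does not depend on $a$. For the inductive step, Alice's input $\alpha_{k+1}$ to the $(k{+}1)$-st call is a function of $(a,r,T_A^{k})$ and Bob's input $\beta_{k+1}$ a function of $(b,r,T_B^{k})$; conditioning on \emph{both} full transcripts, no-right-signalling of $U$ forces the law of Bob's new output $\eta_{k+1}$ to depend on $\beta_{k+1}$ only, hence on $(b,r,T_B^k)$ only and never on $a$ or $T_A^k$. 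Summing out Alice's transcript then factors $P(T_B^{k+1}\mid a,b,r)$ as $P(T_B^{k}\mid a,b,r)$ times a factor independent of $a$, and the hypothesis closes the induction. Since Bob's final answer is a function of $(b,r,T_B^{\text{final}})$, averaging over $r$ shows his marginal is independent of $a$, which is exactly no-right-signalling of the composed strategy.

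To finish, I would observe that the no-right-signalling condition is a finite system of linear equalities on the entries of $P_U$, so the set of no-right-signalling strategies is closed, and that the distance $|\cdot,\cdot|$ is just the $\ell_1$ distance on these entries. Hence a vanishing $\lim_{n\to\infty}|{\mathbf R}\text{-}{\mathbf{SIG}},\,\mbox{LOC}^{n}({\mathbf L}\text{-}{\mathbf{SIG}})|$ would exhibit ${\mathbf R}\text{-}{\mathbf{SIG}}$ as an $\ell_1$-limit of no-right-signalling strategies and therefore force it to be no-right-signalling, contradicting the endpoint computation; so ${\mathbf R}\text{-}{\mathbf{SIG}}\leq_{\mbox{\scriptsize\em LOC}}{\mathbf L}\text{-}{\mathbf{SIG}}$ is impossible, i.e. ${\mathbf R}\text{-}{\mathbf{SIG}}\notin{\mathbf L}\text{-}{\mathbb{SIG}}$. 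The companion statement ${\mathbf L}\text{-}{\mathbf{SIG}}\notin{\mathbf R}\text{-}{\mathbb{SIG}}$ then follows by the left--right symmetry of all the definitions (interchange Alice with Bob and ``no-right-signalling'' with the analogous ``no-left-signalling''). I expect the main obstacle to be the closure lemma, and within it the bookkeeping that legitimizes the single application of no-right-signalling: one must condition on \emph{both} parties' transcripts before invoking it, because Alice's box-inputs are correlated with Bob's past outputs through the shared box calls, and only after that conditioning does the resource's marginal on Bob become manifestly $a$-independent.
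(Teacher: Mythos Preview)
Your proposal is correct and follows essentially the same line as the paper, which dispatches the theorem in two sentences by invoking a ``capacity argument'': every channel in $\mbox{LOC}^{n}({\mathbf R}\text{-}{\mathbf{SIG}})$ has zero left-capacity while ${\mathbf L}\text{-}{\mathbf{SIG}}$ has positive left-capacity, and vice versa. Your ``no-right-signalling'' invariant is exactly the zero-right-capacity condition, your closure lemma is the statement that this capacity is preserved under $\mbox{LOC}^{n}(\cdot)$, and your topological-closure step is the continuity fact needed to pass to the limit; you have simply unpacked what the paper leaves as a one-line appeal to information-theoretic intuition.
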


\begin{proof}
Follows from a simple capacity argument. For all $n$, all the channels in $\mbox{LOC}^{n}({\mathbf R} \text{-} {\mathbf{SIG}})$ have zero left-capacity, while ${\mathbf L} \text{-} {\mathbf{SIG}}$ has non-zero left-capacity. And vice-versa.
\end{proof}

\subsection{Signalling}
We are now ready to define the largest of the locality classes ${\mathbb{SIG}}$. Indeed every possible strategy is in ${\mathbb{SIG}}$.

\begin{definition}

${\mathbb{SIG}} = \{ U | U \leq_{\mbox{\scriptsize\em LOC}} \mathbf{SIG}\}$
\end{definition}

\begin{figure}[h!]

\centering
  \mbox{\Qcircuit @C=1em @R=.7em {
      \lstick{a}  \ar[r] & \multigate{1}{{\mathbf{SIG}}} & \rstick{b} \ar[l] \\
      \lstick{b}   & \ghost{{\mathbf{SIG}}} \ar[r]\ar[l] & \rstick{a}
    }}
\caption{a ${\mathbf{SIG}}$-box}
  \label{SIG}
\end{figure}

\begin{definition}
We say that $U$ Fully Signals (is ${\mathbb{SIG}}$-verbose) iff $U$ Right Signals and Left Signals.
\end{definition}

\subsection{No-Signalling}
We finally define the less intuitive locality class ${\mathbb{NOSIG}}$ in relation to classes defined above. 

\begin{definition}
$\mathbb{NOSIG} = {\mathbf R} \text{-} {\mathbb{SIG}} \bigcap {\mathbf L} \text{-} {\mathbb{SIG}}$.
\end{definition}

A similar characterization may be found in \cite{Acin2015} Section 3 and \cite{Barnum05} Corollary 3.5.

\begin{theorem} \label{NSig}.
The above definition of $\mathbb{NOSIG}$ exactly coincides with the {\em traditional} notion of No-Signalling \cite{BLM+05}.
\end{theorem}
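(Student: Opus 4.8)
The plan is to prove the stated identity by double inclusion, after recalling that, in the traditional formulation \cite{BLM+05}, a strategy $U$ is No-Signalling exactly when both of its one-sided marginals depend on a single input: Alice's marginal $\sum_{y} P_U(x,y|a,b)$ is independent of $b$, and Bob's marginal $\sum_{x} P_U(x,y|a,b)$ is independent of $a$. The whole argument is organized around one claim, namely that ${\mathbf R}\text{-}{\mathbb{SIG}}$ is precisely the set of strategies satisfying the first (Alice-marginal) condition and ${\mathbf L}\text{-}{\mathbb{SIG}}$ precisely those satisfying the second (Bob-marginal) condition; intersecting the two then yields that $\mathbb{NOSIG}$ is exactly the traditional No-Signalling class.

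For the forward inclusion I would first establish an invariant: every $V \in \mathrm{LOC}^{n}({\mathbf R}\text{-}{\mathbf{SIG}})$ has a $b$-independent Alice-marginal. This follows by induction on the wiring of the protocol, the crucial observation being that an ${\mathbf R}\text{-}{\mathbf{SIG}}$-box hands back to Alice only a copy of her own input. Hence, in any adaptive local protocol built from such boxes, every value ever produced on Alice's side is a function of $(a,r)$ alone, where $r$ is the shared randomness; in particular her final answer is $x = g(a,r)$, so $P_V(x|a,b) = |\{ r : g(a,r) = x \}|/|R|$ carries no dependence on $b$. Since the Alice-marginal condition is a finite system of linear equalities, it cuts out a closed subset of the probability simplex and is therefore preserved under the limit defining $\le_{\mathrm{LOC}}$; thus every $U \in {\mathbf R}\text{-}{\mathbb{SIG}}$ inherits it. The mirror-image argument gives the Bob-marginal condition for ${\mathbf L}\text{-}{\mathbb{SIG}}$, and any $U$ in the intersection satisfies both, hence is traditionally No-Signalling.

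For the reverse inclusion I would start from a traditionally No-Signalling $U$ and build a reduction $U \le_{\mathrm{LOC}} {\mathbf R}\text{-}{\mathbf{SIG}}$ (the reduction to ${\mathbf L}\text{-}{\mathbf{SIG}}$ being symmetric). No-Signalling makes $P_U(x|a) := \sum_{y} P_U(x,y|a,b)$ well-defined and yields the factorization $P_U(x,y|a,b) = P_U(x|a)\, P_U(y|a,b,x)$. The protocol splits the shared randomness as $r = (r_A, r_B)$: Alice sets $x = f_A(a,r_A)$ so that $x \sim P_U(\cdot|a)$, transmits $a$ to Bob through the right-signalling box, and Bob --- who now knows $a$, can recompute $x = f_A(a,r_A)$ from the shared $r_A$, and holds independent randomness $r_B$ --- returns $y = f_B(a,b,x,r_B) \sim P_U(\cdot|a,b,x)$. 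This reproduces the joint distribution exactly when the relevant marginals are rational; for irrational values one approximates each sampling law by the nearest multiple of $1/|R|$ and lets $|R| \to \infty$, so that $|U,\mathrm{LOC}^{n}({\mathbf R}\text{-}{\mathbf{SIG}})|$ (for any fixed $n$ large enough to transmit $a$) can be made arbitrarily small and the limit is $0$, giving $U \in {\mathbf R}\text{-}{\mathbb{SIG}}$.

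I expect the genuine obstacle to sit in the forward direction, in the invariant-plus-closure step rather than in the explicit simulations. One must argue with care that composing ${\mathbf R}\text{-}{\mathbf{SIG}}$-boxes under arbitrary adaptive local wiring never manufactures a back-channel to Alice, so that $b$-independence of her marginal is a true invariant of all of $\mathrm{LOC}^{n}$ and not merely of a single box call (this is the same phenomenon that powers the zero-capacity argument of Theorem \ref{LR-impossible}), and then that passing to the $n \to \infty$ limit cannot violate a closed linear constraint. By comparison, the simulation direction is routine once the factorization $P_U(x,y|a,b) = P_U(x|a)\,P_U(y|a,b,x)$ is available, the only delicacy being the standard rational approximation of the sampling distributions.
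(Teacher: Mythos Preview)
Your proposal is correct, and the reverse inclusion (traditional No-Signalling implies membership in $\mathbb{NOSIG}$) is essentially identical to the paper's: both exploit the factorization $P_U(x,y\mid a,b)=P_U(x\mid a)\,P_U(y\mid a,b,x)$ and have Alice sample $x$ locally, then ship $(a,x)$ to Bob via the one-way box so he can sample $y$. You are in fact more careful than the paper about the rational-approximation issue forced by the finiteness of $R$.

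The forward inclusion, however, follows a genuinely different route. The paper argues by contrapositive: it asserts that any traditionally signalling $U$ is \emph{verbose} for one of the one-way classes (that is, ${\mathbf R}\text{-}{\mathbf{SIG}} \leq_{\mbox{\scriptsize LOC}} U$ or the mirror statement), and then invokes Theorem~\ref{LR-impossible} and transitivity of $\leq_{\mbox{\scriptsize LOC}}$ to exclude $U$ from the opposite class and hence from the intersection. That route hides a Shannon-type amplification step --- that any nonzero one-way capacity can be boosted, in the $n\to\infty$ limit, to a noiseless bit channel --- which the paper does not spell out. Your route is more elementary and self-contained: you show directly that ``$P(x\mid a,b)$ is $b$-independent'' is an invariant of every strategy in $\mbox{LOC}^{n}({\mathbf R}\text{-}{\mathbf{SIG}})$ (since the box hands Alice back only what she fed in) and is preserved under the limit as a closed linear constraint, so ${\mathbf R}\text{-}{\mathbb{SIG}}$ sits inside the Alice-marginal half of traditional No-Signalling, and symmetrically for ${\mathbf L}\text{-}{\mathbb{SIG}}$. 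This sidesteps the capacity/coding argument entirely; your instinct that the invariant-plus-closure step is where the real content lies is exactly right, and it is also the step the paper's proof leaves most implicit.
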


\begin{proof}
If $U$ is signalling then it is verbose for at least one of ${\mathbf R} \text{-} {\mathbb{SIG}}$ or ${\mathbf L} \text{-} {\mathbb{SIG}}$.
Without loss of generality, assume it is verbose for ${\mathbf R} \text{-} {\mathbb{SIG}}$. Then by theorem \ref{LR-impossible}, $U \not\in {\mathbf L} \text{-} {\mathbb{SIG}}$, thus $U \not\in {\mathbf R} \text{-} {\mathbb{SIG}} \bigcap {\mathbf L} \text{-} {\mathbb{SIG}}$.


If $U$ is no-signalling then Alice's marginal distribution is independent from Bob's input $b$. Therefore, she can sample an output $x$ according to her input $a$ only as $P_{X|A=a}$ deduced from $P_{X,Y|A,B}$.
Alice can now communicate $a,x$ to Bob. Bob given $a,b,x$ can select $y$ according to the distribution $P_{Y|A=a,B=b,X=x}$ deduced from $P_{X,Y|A,B}$. The produced $x,y$ will have distribution $P_{X,Y|A=a,B=b}$ as expected.
This proves $U \in {\mathbf R} \text{-} {\mathbb{SIG}}$. Membership to ${\mathbf L} \text{-} {\mathbb{SIG}}$ is proven similarly.

This completes the proof of Theorem \ref{NSig}.

\end{proof}

Figure \ref{HIER} shows the relation of these classes as well as the case obtained via quantum entanglement (${\mathbb{|LOC\rangle }}$) as considered by Bell \cite{BELL64}.

\begin{figure}[hbt]
\begin{center}
\fbox{
\includegraphics[width=1.0\textwidth]{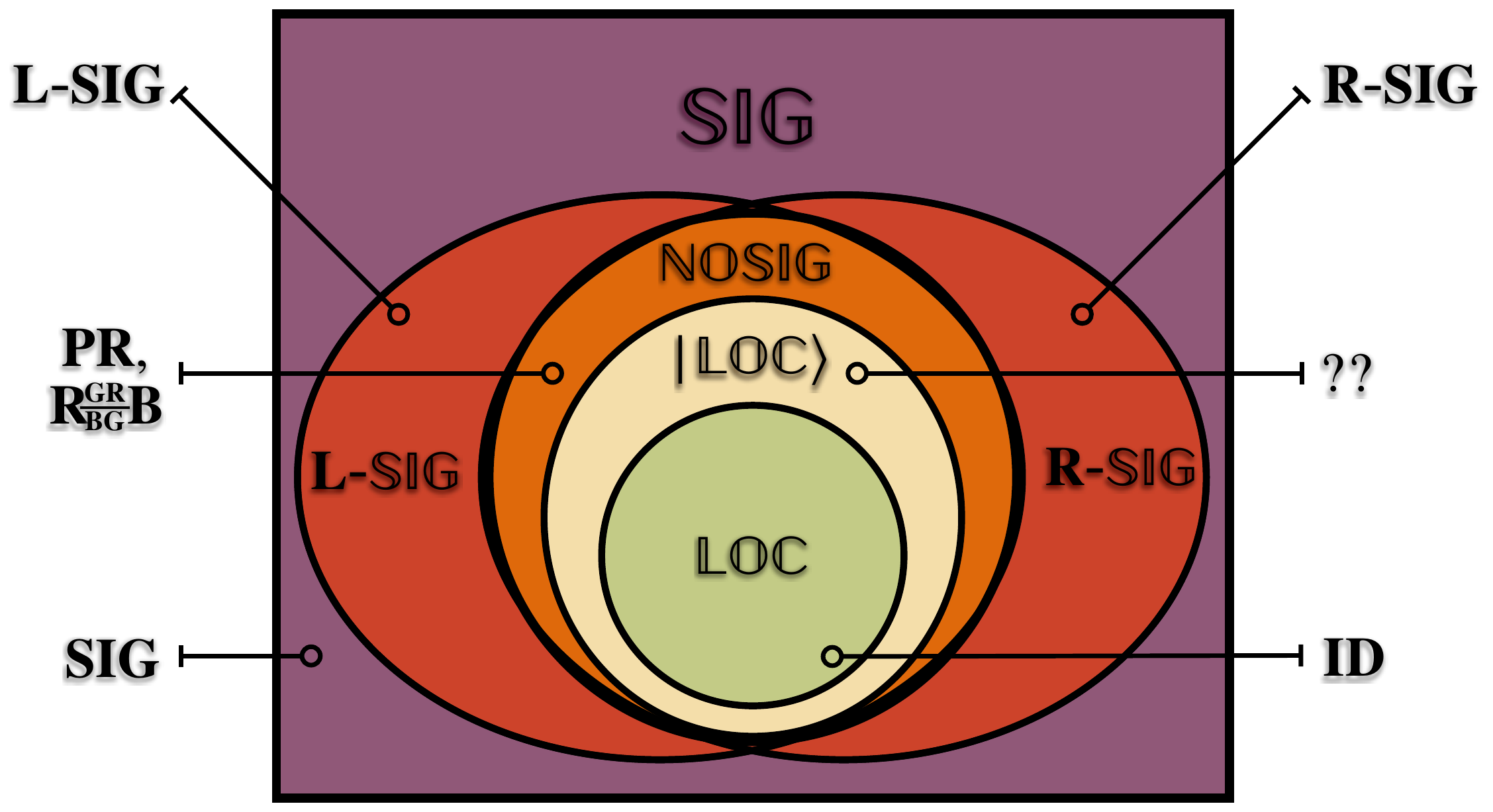}
}
\end{center}
\caption{\label{HIER} Locality Hierarchy and complete distributions in each class.}
\end{figure}

\begin{definition}
We say that $U$ does not Signal iff $U$ does not Right Signal nor Left Signal iff $U \in \mathbb{NOSIG}$.
\end{definition}

\begin{theorem}
If $U \in {\mathbf R} \text{-} {\mathbb{SIG}}$ (or $U \in {\mathbf L} \text{-} {\mathbb{SIG}}$) and $U$ is symmetric then $U$ does not Signal.
\end{theorem}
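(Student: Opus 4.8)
The plan is to exploit the mirror symmetry between the Left and Right signalling classes. I introduce the \emph{swap} operator $S$ on strategies, defined by $P_{S(U)}(x,y\mid a,b) = P_{U}(y,x\mid b,a)$, which simply exchanges the roles of Alice and Bob; ``$U$ is symmetric'' then means precisely $S(U)=U$. Since the hypothesis already gives $U \in {\mathbf R} \text{-} {\mathbb{SIG}}$, it suffices to prove $U \in {\mathbf L} \text{-} {\mathbb{SIG}}$, for then $U \in {\mathbf R} \text{-} {\mathbb{SIG}} \cap {\mathbf L} \text{-} {\mathbb{SIG}} = \mathbb{NOSIG}$, and by the definition preceding this theorem $U$ does not Signal. (The case $U \in {\mathbf L} \text{-} {\mathbb{SIG}}$ is handled identically with Left and Right interchanged.)

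First I would record the base case: a direct computation on Figures \ref{rSIG} and \ref{lSIG} shows $S({\mathbf R} \text{-} {\mathbf{SIG}}) = {\mathbf L} \text{-} {\mathbf{SIG}}$. Indeed ${\mathbf R} \text{-} {\mathbf{SIG}}$ outputs $(a,a)$ on input $(a,b)$, so $S({\mathbf R} \text{-} {\mathbf{SIG}})$ outputs $(b,b)$ on input $(a,b)$, which is exactly ${\mathbf L} \text{-} {\mathbf{SIG}}$. I would also note that $S$ is an isometry for both distances of the paper: $|S(U),S(U')| = |U,U'|$ (the defining sum is merely reindexed by the swap), and consequently $|S(U),S({\mathcal U'})| = |U,{\mathcal U'}|$ where $S({\mathcal U'}) = \{S(U') : U' \in {\mathcal U'}\}$.

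The crux of the argument---and the step I expect to require the most care---is the claim that local reducibility commutes with $S$, which reduces to the set identity $S(\mbox{LOC}^{n}(U)) = \mbox{LOC}^{n}(S(U))$ for every $n$. This holds because the local-extension construction is itself symmetric under relabelling Alice $\leftrightarrow$ Bob: given any protocol realising a strategy in $\mbox{LOC}^{n}(U)$ from shared randomness $r$, the functions $f_A^1,\dots,f_A^n$ and $f_B^1,\dots,f_B^n$ of the footnote construction, together with the wiring of each of the $n$ sub-routine calls to $U$, can be mirror-imaged---swapping Alice's functions with Bob's and transposing the two ports of every $U$-call---into a protocol that realises the swapped strategy from $S(U)$. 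This mirroring is a bijection between the two local-extension sets, and verifying that it genuinely preserves the induced output distribution (rather than merely matching the syntax of the protocol) is the delicate point, since one must track the swap through the adaptively chosen inputs of each call.

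Granting the identity $S(\mbox{LOC}^{n}(U)) = \mbox{LOC}^{n}(S(U))$, the conclusion is immediate. From $U \leq_{\mbox{\scriptsize\em LOC}} {\mathbf R} \text{-} {\mathbf{SIG}}$ we have $\lim_{n} |U, \mbox{LOC}^{n}({\mathbf R} \text{-} {\mathbf{SIG}})| = 0$; applying the isometry and the commutation identity gives $\lim_{n} |S(U), \mbox{LOC}^{n}(S({\mathbf R} \text{-} {\mathbf{SIG}}))| = 0$, i.e. $\lim_{n} |S(U), \mbox{LOC}^{n}({\mathbf L} \text{-} {\mathbf{SIG}})| = 0$, so $S(U) \leq_{\mbox{\scriptsize\em LOC}} {\mathbf L} \text{-} {\mathbf{SIG}}$. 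By symmetry $S(U) = U$, hence $U \in {\mathbf L} \text{-} {\mathbb{SIG}}$, completing the proof.
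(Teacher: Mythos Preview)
Your proof is correct and follows essentially the same approach as the paper: use the symmetry of $U$ to pass from $U \in {\mathbf R}\text{-}{\mathbb{SIG}}$ to $U \in {\mathbf L}\text{-}{\mathbb{SIG}}$, then invoke $\mathbb{NOSIG} = {\mathbf R}\text{-}{\mathbb{SIG}} \cap {\mathbf L}\text{-}{\mathbb{SIG}}$. The paper's own proof is a single sentence asserting this implication without justification, whereas you have supplied the underlying machinery (the swap operator, its isometry property, and the commutation $S(\mbox{LOC}^{n}(U)) = \mbox{LOC}^{n}(S(U))$) that the paper leaves entirely implicit.
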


\begin{proof}
$U \in {\mathbf R} \text{-} {\mathbb{SIG}}$ and $U$ is symmetric imply that $U \in {\mathbf L} \text{-} {\mathbb{SIG}}$ as well. Thus $U\in {\mathbf R} \text{-} {\mathbb{SIG}} \bigcap {\mathbf L} \text{-} {\mathbb{SIG}}$.
\end{proof}


\begin{theorem}
${{\RGRB}} \in \mathbb{NOSIG}$
\end{theorem}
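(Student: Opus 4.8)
The plan is to reduce the claim to the traditional no-signalling condition, which Theorem~\ref{NSig} has already shown to be equivalent to membership in $\mathbb{NOSIG}$. Concretely, it suffices to verify that the explicit distribution $P_{x,y|a,b}$ defining $\RGRB$ --- given by (\ref{uu}) and (\ref{uv}) with every parameter set to $\tfrac12$ --- has the property that Alice's marginal $P_{X|A=a,B=b}$ is independent of $b$ and Bob's marginal $P_{Y|A=a,B=b}$ is independent of $a$.

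First I would compute Alice's marginal by summing the $\RGRB$ distribution over $y$ in each of the two regimes. When $a=b=u$, equation (\ref{uu}) gives weight $\tfrac12$ to $(x,y)=(u+1,u-1)$ and weight $\tfrac12$ to $(u-1,u+1)$, so $x$ is uniform on $\{u+1,u-1\}$. When $a=u\neq v=b$, equation (\ref{uv}) (with $p_{uv}=q_{uv}=\tfrac12$, hence $P_{v,u|u,v}=0$) puts weight $\tfrac12$ on $(x,y)=(w,u)$ and $\tfrac12$ on $(v,w)$, where $w$ is the third colour; again $x$ takes each of the two values different from $u$ with probability $\tfrac12$. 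In both regimes Alice's marginal is the uniform distribution on the two colours $\neq a$, and in particular it does not depend on $b$.

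Next I would handle Bob's side. Rather than repeat the computation, I would observe that $\RGRB$ is symmetric, i.e. $P_{x,y|a,b}=P_{y,x|b,a}$: this is immediate from (\ref{uu}) and (\ref{uv}) once one notes that setting all parameters equal to $\tfrac12$ makes the two provers interchangeable, the defining constraints $a\neq x\neq y\neq b$ and $(x,y)\neq(b,a)$ being themselves symmetric under swapping the parties. Symmetry then transports the marginal computation above to Bob, giving that $P_{Y|A=a,B=b}$ is uniform on the two colours $\neq b$ and independent of $a$. With both marginals shown to be non-signalling, Theorem~\ref{NSig} yields $\RGRB\in\mathbb{NOSIG}$. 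Alternatively, once symmetry is in hand one may skip Bob's marginal entirely and instead argue $\RGRB\in{\mathbf R}\text{-}{\mathbb{SIG}}$ --- Alice samples $x$ from her $b$-independent marginal and relays $a,x$ rightward, after which Bob completes with $y\sim P_{Y|A=a,B=b,X=x}$ --- then apply the preceding symmetric-box theorem to conclude membership in $\mathbb{NOSIG}$.

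The computation itself is routine; the only place demanding care is the bookkeeping of the two cases $a=b$ and $a\neq b$ and the verification that the $\tfrac12$-weights are consistently assigned so that the $b$-dependence genuinely cancels in Alice's marginal. Establishing symmetry cleanly --- so that Bob's half comes for free --- is the one structural step worth stating explicitly rather than recomputing.
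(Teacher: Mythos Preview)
Your proposal is correct. The paper's proof is the two-line version you offer as your alternative at the end: it simply asserts $\RGRB\in{\mathbf R}\text{-}{\mathbb{SIG}}$ and that $\RGRB$ is symmetric, then invokes the preceding theorem (symmetric $+$ one-way signalling $\Rightarrow$ no-signalling). Your primary route --- computing Alice's marginal explicitly, transporting it to Bob by symmetry, and then appealing to Theorem~\ref{NSig} --- reaches the same conclusion via the traditional no-signalling characterization rather than via the symmetric-box theorem; the extra work is the marginal computation, which the paper leaves implicit but which is exactly what justifies the bare assertion $\RGRB\in{\mathbf R}\text{-}{\mathbb{SIG}}$.
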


\begin{proof}
${{\RGRB}} \in {\mathbf R} \text{-} {\mathbb{SIG}}$ and ${{\RGRB}}$ is symmetric.
\end{proof}

\begin{theorem}\label{RGRBPR}
${{\RGRB}} =_{\mbox{\scriptsize\em LOC}} {\bf PR}$
\end{theorem}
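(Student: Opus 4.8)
The plan is to establish the two reductions $\mathbf{PR}\leq_{\mbox{\scriptsize\em LOC}}\RGRB$ and $\RGRB\leq_{\mbox{\scriptsize\em LOC}}\mathbf{PR}$ separately, each by an \emph{exact} finite protocol, so that the limit in the definition of $\leq_{\mbox{\scriptsize\em LOC}}$ is trivially attained. The conceptual heart of the argument is the observation that $\RGRB$ hides a binary XOR structure. Since every admissible answer satisfies $x\neq a$ and $y\neq b$, Alice may record the bit $\alpha$ defined by $x=a+1+\alpha \pmod 3$ (so $\alpha=0$ means $x=a+1$ and $\alpha=1$ means $x=a-1$), and Bob the analogous bit $\beta$ with $y=b+1+\beta$. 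A direct inspection of the distribution of Figure 3 shows that these bits are uniform subject to the single constraint $\alpha\oplus\beta=[a=b]$, i.e. $\alpha\oplus\beta=1$ exactly when $a=b$. Thus $\RGRB$ is locally equivalent to the ``equality box'' emitting uniform bits $\alpha,\beta$ with $\alpha\oplus\beta=[a=b]$, the encoding and decoding between $(x,y)$ and $(\alpha,\beta)$ being purely local and invertible.

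For $\mathbf{PR}\leq_{\mbox{\scriptsize\em LOC}}\RGRB$ I would simulate the CHSH condition $x'\oplus y'=a'\wedge b'$ with a single call to $\RGRB$. Alice feeds $a:=a'$ and Bob feeds $b:=g(b')$ with $g(0)=2,\ g(1)=1$; one checks on all four input pairs that $[a=b]=a'\wedge b'$. Running $\RGRB$, Alice decodes $\alpha$ and outputs $x':=\alpha$, while Bob decodes $\beta$ and outputs $y':=\beta$. Then $x'\oplus y'=\alpha\oplus\beta=[a=b]=a'\wedge b'$, and uniformity of $(\alpha,\beta)$ among its two admissible values transfers to uniformity of $(x',y')$ among the two CHSH solutions, reproducing $\mathbf{PR}$ exactly.

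For the converse $\RGRB\leq_{\mbox{\scriptsize\em LOC}}\mathbf{PR}$ I would build the equality box from three independent $\mathbf{PR}$-boxes, one per colour $c\in\{0,1,2\}$, exploiting the identity $[a=b]=\bigoplus_{c}[a=c]\,[b=c] \pmod 2$ (the right-hand side has exactly one nonzero summand iff $a=b$). Into the $c$-th box Alice feeds the local bit $[a=c]$ and Bob feeds $[b=c]$, obtaining $x_c\oplus y_c=[a=c]\wedge[b=c]$. Alice then sets $\alpha:=r\oplus\bigoplus_c x_c$ and Bob sets $\beta:=r\oplus\bigoplus_c y_c$ for a shared random bit $r$; hence $\alpha\oplus\beta=\bigoplus_c(x_c\oplus y_c)=[a=b]$, while $r$ forces $(\alpha,\beta)$ to be uniform among its two values. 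Re-encoding via $x:=a+1+\alpha$ and $y:=b+1+\beta \pmod 3$ yields $\RGRB$ exactly, using $n=3$ boxes.

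The routine parts are the two verifications (that $[a=b]=a'\wedge b'$ under the map $g$, and the GF(2) equality identity) together with the uniformity bookkeeping, which the shared bit $r$ settles cleanly. The only genuinely creative step, and the one I expect to be the main obstacle, is spotting the bit-encoding $x=a+1+\alpha$ that collapses the ternary $\RGRB$ correlations onto a single XOR predicate; once $\RGRB$ is recast as the equality box, both reductions become short and, pleasingly, exact rather than merely asymptotic.
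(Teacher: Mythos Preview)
Your proposal is correct and rests on exactly the same key insight as the paper: recoding each answer by the bit $\alpha$ in $x=a+1+\alpha\pmod 3$ (and similarly $\beta$ for Bob) collapses $\RGRB$ to an ``equality box'' with $\alpha\oplus\beta=[a=b]$, after which both reductions are short and exact. Your one-call simulation of $\mathbf{PR}$ from $\RGRB$ matches the paper's $\mathbf{PR}\in\mbox{LOC}^{1}(\RGRB)$ up to cosmetic relabelling of Bob's input.

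The only substantive difference is in the reverse direction. You compute $[a=b]$ via the one-hot identity $[a=b]=\bigoplus_{c}[a{=}c]\wedge[b{=}c]$, costing three $\mathbf{PR}$-boxes; the paper instead writes each colour on two bits and computes equality as the conjunction of two bitwise comparisons, using only two $\mathbf{PR}$-boxes and establishing $\RGRB\in\mbox{LOC}^{2}(\mathbf{PR})$. Both suffice for the theorem, since $=_{\mbox{\scriptsize\em LOC}}$ is asymptotic, but the paper later invokes the sharper bound $\RGRB\in\mbox{LOC}^{2}(\mathbf{PR})\setminus\mbox{LOC}^{1}(\mathbf{PR})$ when discussing why the optimal quantum $\mathbf{PR}$ approximation does not directly yield a good quantum RGB strategy; your three-box construction would not supply that. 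As a small aside, the shared random bit $r$ in your second reduction is unnecessary: the $\mathbf{PR}$ outputs $x_c$ are already individually uniform, so $\bigoplus_c x_c$ is uniform without extra randomness.
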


\begin{proof}
First we show how ${\bf PR}$ may be achieved from ${{\RGRB}}$, more precisely that ${\bf PR}\in \mbox{LOC}^{1\!}({{\RGRB}})$.
All arithmetic operations are performed modulo 3.
Let $a' := f_{A}^{1}(a) := a$, and $b' := f_{B}^{1}(b) := 2b$.
The possible pairs for $(a',b')$ are therefore $(0,0), (0,2), (1,0), (1,2)$.
Let $(x',y') \leftarrow {{\RGRB}}(a',b')$.
Let $x := f_{A}^{2}(a,x') := 2(x'-a+1)$, and $y := f_{B}^{2}(b,y') := 2(y'-2b+1)$.
We leave it as an exercice to check that $(x,y)$ indeed satisfy the CHSH condition
that $x\oplus y = a \wedge b$.

Secondly, we show how ${{\RGRB}}$ may be achieved from ${\bf PR}$, more precisely that ${{\RGRB}}\in \mbox{LOC}^{2\!}({\bf PR})$.
Again, all arithmetic operations are performed modulo 3\footnote{Therefore modulo 2 for the exponents according to Fermat's little theorem.}.
The intuition in this case is that if $a=b$ then $(x,y)$ should be either $(a+1,b-1)$ or $(a-1,b+1)$ at random.
If $a\neq b$ then $(x,y)$ should be either $(a+1,b+1)$ or $(a-1,b-1)$ at random. 
The following computations achieve precisely this using the identity $a = b \mbox{ iff } (\neg a' \oplus b') \land (\neg a'' \oplus b'')$.

The first pair of functions compute the negation of the most significant bit of their inputs:
let $a' := f_{A}^{1}(a) := 1-2(a-1)a$, and $b' := f_{B}^{1}(b) := 1-2(b-1)b$.
Let $(x',y') \leftarrow {{{\bf PR}}}(a',b')$.

The second pair of functions compute the negation of the least significant bit of their inputs:
let $a'' := f_{A}^{2}(a,x') := 1-2(a-1)(a+1)$, and $b'' := f_{B}^{2}(b,y') := 1-2(b-1)(b+1)$.
Let $(x'',y'') \leftarrow {{{\bf PR}}}(a'',b'')$.

The third pair of functions compute $a\pm 1, b\pm 1$ according to the intuitive rule above:
let $x := f_{A}^{3}(a,x',x'') := a+2^{a_{1}*a_{2}+x'+x''}$, and $y := f_{B}^{3}(b,y',y'') := b+2^{b_{1}*b_{2}+y'+y''}$.
\end{proof}

\begin{corollary}\label{FW11}
{\RGRB} is $\mathbb{NOSIG}$-Complete.
\end{corollary}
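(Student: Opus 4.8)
The plan is to obtain completeness by transporting the known completeness of the {\bf PR}-box across the local equivalence of Theorem~\ref{RGRBPR}. Unfolding the definition of $\mathbb{NOSIG}$-completeness (via the $\mathbb{L}$-verbose footnote), two obligations must be met: membership, ${\RGRB}\in\mathbb{NOSIG}$, and verbosity, namely that every $U\in\mathbb{NOSIG}$ satisfies $U\leq_{\mbox{\scriptsize\em LOC}}{\RGRB}$. Membership is already discharged by the preceding theorem, so the entire burden lies in verbosity.

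For verbosity I would first invoke Forster and Wolf~\cite{PhysRevA.84.042112}, who established that {\bf PR} is complete for the No-Signalling distributions under an asymptotic reducibility of the same flavour as $\leq_{\mbox{\scriptsize\em LOC}}$; concretely this yields $U\leq_{\mbox{\scriptsize\em LOC}}{\bf PR}$ for every $U\in\mathbb{NOSIG}$. Theorem~\ref{RGRBPR} supplies the link in the other direction: its first half shows ${\bf PR}\in\mbox{LOC}^{1}({\RGRB})$, hence ${\bf PR}\leq_{\mbox{\scriptsize\em LOC}}{\RGRB}$. Chaining the two, $U\leq_{\mbox{\scriptsize\em LOC}}{\bf PR}\leq_{\mbox{\scriptsize\em LOC}}{\RGRB}$, gives $U\leq_{\mbox{\scriptsize\em LOC}}{\RGRB}$ for an arbitrary $U\in\mathbb{NOSIG}$, which is exactly verbosity. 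Together with membership this certifies {\RGRB} as $\mathbb{NOSIG}$-complete.

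The crux of the argument is the transitivity of $\leq_{\mbox{\scriptsize\em LOC}}$ used in the chaining step, which is not immediate from the asymptotic definition. The proof I have in mind is a composition (or substitution) lemma: given $U\leq_{\mbox{\scriptsize\em LOC}}{\bf PR}$ and ${\bf PR}\leq_{\mbox{\scriptsize\em LOC}}{\RGRB}$, one approximates $U$ by a strategy in $\mbox{LOC}^{m}({\bf PR})$ and then replaces each of the at most $m$ internal {\bf PR}-calls by a sufficiently accurate strategy drawn from $\mbox{LOC}^{k}({\RGRB})$. Since the distance $|\cdot,\cdot|$ is sub-additive under such substitutions, the accumulated error is bounded by $m$ times the per-call error; with $m$ fixed one drives the per-call error to zero, so the composite lies in $\mbox{LOC}^{mk}({\RGRB})$ at vanishing distance from $U$. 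Bounding this error propagation across the nested calls is the delicate point, and it is precisely where the general framework of the forthcoming companion paper does the heavy lifting.

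A secondary obstacle worth flagging is the need to reconcile the Forster--Wolf reducibility with $\leq_{\mbox{\scriptsize\em LOC}}$: their notion is only \emph{similar} to ours, so one should verify that their completeness statement transfers, i.e.\ that No-Signalling membership and asymptotic reducibility agree closely enough in both frameworks for the conclusion $U\leq_{\mbox{\scriptsize\em LOC}}{\bf PR}$ to hold verbatim. Once that calibration and the composition lemma are in place, the corollary follows immediately from the two-line chaining above.
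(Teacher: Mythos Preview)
Your proposal is correct and follows exactly the paper's route: invoke Forster--Wolf for the $\mathbb{NOSIG}$-completeness of {\bf PR} and transport it to ${\RGRB}$ via the local equivalence of Theorem~\ref{RGRBPR}. The paper compresses this into a single sentence and does not pause to justify transitivity of $\leq_{\mbox{\scriptsize\em LOC}}$ or the calibration with Forster--Wolf's notion; your flagging of those two points is more scrupulous than the original.
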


\begin{proof}
Since {\bf PR} was previously proved $\mathbb{NOSIG}$-Complete by Forster and Wolf \cite{PhysRevA.84.042112} , then so is {\RGRB}.
\end{proof}

\begin{theorem}
${{\RGRB}}$ is the ONLY strategy winning the RGB game that is also No-Signalling.
\end{theorem}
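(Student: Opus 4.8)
The plan is to begin from the complete parametrization of all winning strategies recorded in (\ref{uu}) and (\ref{uv}): every distribution that wins the RGB game with certainty is pinned down by the fifteen parameters $p_0,p_1,p_2$ and $p_{uv},q_{uv}$, subject only to $0\le p_u\le 1$, $p_{uv},q_{uv}\ge 0$ and $p_{uv}+q_{uv}\le 1$. Since $\RGRB$ already lives inside this family, it suffices to impose No-Signalling on top of (\ref{uu})--(\ref{uv}) and to show that the resulting system admits a unique solution. By Theorem \ref{NSig}, No-Signalling is equivalent to demanding that Alice's output marginal $P_{X|A=a,B=b}$ be independent of $b$ for every $a$, and symmetrically that Bob's marginal $P_{Y|A=a,B=b}$ be independent of $a$ for every $b$; these are linear constraints on the fifteen parameters.

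Next I would write out these marginals explicitly. For a fixed question $a=u$, the winning condition $x\neq a$ forces Alice's answer into $\{u-1,u+1\}$, and summing (\ref{uu})--(\ref{uv}) over Bob's answer shows that Alice returns $u+1$ with probability $p_u$ when $b=u$, with probability $1-p_{u,u+1}$ when $b=u+1$, and with probability $p_{u,u-1}$ when $b=u-1$. Independence of $b$ therefore yields $p_u=1-p_{u,u+1}=p_{u,u-1}$ for each $u$. The mirror-image computation on Bob's side gives $p_v=q_{v+1,v}=1-q_{v-1,v}$. These six pairs of equations express every $p_{uv}$ and every $q_{uv}$ as either $p_{u}$ or $1-p_{u}$, so after imposing No-Signalling the whole family collapses onto the three remaining free parameters $p_0,p_1,p_2$.

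Here lies the crux, and the step I expect to be the real (if brief) obstacle: the linear No-Signalling constraints alone do \emph{not} pin the strategy down, since they leave a three-parameter family behind. One must therefore bring in the remaining defining inequality of the parametrization, namely $P_{v,u|u,v}=1-p_{uv}-q_{uv}\ge 0$. Substituting the relations just found, the six positivity inequalities read, in opposing pairs, $p_0+p_1\ge 1$ together with $p_0+p_1\le 1$ (the first coming from the pair $(u,v)=(0,1)$ and the second from $(1,0)$), and likewise $p_0+p_2=1$ and $p_1+p_2=1$ from the pairs $\{(0,2),(2,0)\}$ and $\{(1,2),(2,1)\}$. The collision of the $\ge$ and $\le$ directions forces $p_0+p_1=p_0+p_2=p_1+p_2=1$, whose unique solution is $p_0=p_1=p_2=\tfrac12$, and then every $p_{uv}=q_{uv}=\tfrac12$ as well. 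That is exactly $\RGRB$, establishing uniqueness. The one point to verify with care is that the two inequalities feeding each equality genuinely come from distinct ordered pairs $(u,v)$ and $(v,u)$ — that is what makes the two opposite bounds collide — and that all six positivity constraints have indeed been accounted for.
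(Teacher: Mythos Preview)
Your proof is correct and follows essentially the same route as the paper's own argument: impose the No-Signalling marginals on the parametrization (\ref{uu})--(\ref{uv}) to reduce to the three parameters $p_0,p_1,p_2$, then use the nonnegativity of $P_{v,u|u,v}=1-p_{uv}-q_{uv}$ on the ordered pairs $(u,v)$ and $(v,u)$ to force $p_u+p_{u+1}=1$ for every $u$, whence $p_0=p_1=p_2=\tfrac12$. The paper phrases the last step slightly differently, writing $P_{u+1,u|u,u+1}=-P_{u,u+1|u+1,u}$ so that both being nonnegative forces both to vanish, but this is exactly your collision of the $\ge 1$ and $\le 1$ bounds.
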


\begin{proof}
%

Using the notation of Equations (\ref{uu}) -- (\ref{uv}), for No-Signalling on Alice's side we need
$$P_{u+1,u-1|u,u} = p_{u} = P_{u+1,u-1|u,u+1} + P_{u+1,u|u,u+1} = 1-p_{u\,u+1} = P_{u+1,u|u,u-1} = p_{u\,u-1}, 0\leq u\leq 2$$
%
and symmetrically on Bob's side
$$P_{u-1,u+1|u,u} = 1-p_{u} = P_{u-1,u+1|u+1,u} + P_{u,u+1|u+1,u} = 1-q_{u+1\,u} = P_{u,u+1|u-1,u} = q_{u-1\,u}, 0\leq u\leq 2.$$

Using all 6 sets of equalities we can get rid of all the variables but $p_{0}, p_{1}, p_{2}$ by setting
$$ p_{u\,u-1} = q_{u+1\,u} = p_{u}  \text{ and } p_{u\,u+1} = q_{u-1\,u} = 1-p_{u}, 0\leq u\leq 2.$$
It follows that
$$P_{u+1,u|u,u+1} = p_{u}+p_{u+1}-1 = -P_{u,u+1|u+1,u}, 0\leq u\leq 2$$
and since both $P_{u+1,u|u,u+1}$ and $P_{u,u+1|u+1,u}$ must be greater or equal to zero we conclude
$$P_{u+1,u|u,u+1} = P_{u,u+1|u+1,u} = 0  \text{ and }  p_{u} = 1-p_{u+1}, 0\leq u\leq 2.$$
It results that $p_{0} = 1-p_{1} = p_{2} = 1-p_{0} = p_{1} = 1-p_{2}$ and thus
$$p_{0} = p_{1} = p_{2} = p_{01} = p_{10} = p_{12} = p_{21} = p_{20} = p_{02} = q_{01} = q_{10} = q_{12} = q_{21} = q_{20} = q_{02} = 1/2$$
is the only solution as claimed.
%
%
%
%
\end{proof}

\begin{theorem}\label{RGB89}
The maximum local winning probability $p^{\operatorname{win}}_{\operatorname{local}}$ to the RGB game is 8/9.
\end{theorem}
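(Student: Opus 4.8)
The plan is to exploit the fact that the winning probability is an affine function of the strategy. For the fixed RGB game the probability of winning is affine in the distribution $P_{(x,y\mid a,b)}$, and a local strategy is by definition a convex combination (a mixture over $r\in R$) of deterministic strategies. Since an affine function on a polytope attains its maximum at a vertex, it suffices to maximise over deterministic strategies, i.e.\ over pairs of functions $f_{A},f_{B}:\{0,1,2\}\to\{0,1,2\}$ with Alice answering $x=f_{A}(a)$ and Bob answering $y=f_{B}(b)$. As the inputs $(a,b)$ are uniform over the $9$ pairs, the winning probability of $(f_{A},f_{B})$ equals $\tfrac19$ times the number of pairs $(a,b)$ satisfying simultaneously $a\neq f_{A}(a)$, $b\neq f_{B}(b)$, and $f_{A}(a)\neq f_{B}(b)$.

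For the upper bound I would show that no deterministic strategy can win all $9$ pairs. Suppose one did. Winning a pair $(a,b)$ requires $f_{A}(a)\neq a$ and $f_{B}(b)\neq b$, and since neither condition involves the other player's input, winning all pairs forces both $f_{A}$ and $f_{B}$ to be fixed-point-free. A fixed-point-free self-map of a $3$-element set cannot be constant — a constant map with value $c$ would fix $c$ — so the image of each of $f_{A}$ and $f_{B}$ contains at least two colours. The remaining condition $f_{A}(a)\neq f_{B}(b)$ for all $a,b$ says precisely that the images of $f_{A}$ and $f_{B}$ are disjoint subsets of $\{0,1,2\}$. But two disjoint subsets, each of size at least $2$, cannot both fit inside a set of only $3$ colours. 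This contradiction shows every deterministic strategy loses at least one of the $9$ pairs, hence $p^{\operatorname{win}}_{\operatorname{local}}\le 8/9$.

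For the matching lower bound I would exhibit a deterministic strategy winning exactly $8$ of the $9$ pairs: take $(f_{A}(0),f_{A}(1),f_{A}(2))=(1,0,0)$ and $(f_{B}(0),f_{B}(1),f_{B}(2))=(2,2,1)$. Both maps are fixed-point-free, their images are $\{0,1\}$ and $\{1,2\}$, and the only shared colour is $1$, produced by $f_{A}$ only on $a=0$ and by $f_{B}$ only on $b=2$; thus the single pair $(a,b)=(0,2)$ is lost while the other eight are won. This yields winning probability $8/9$, which together with the upper bound gives $p^{\operatorname{win}}_{\operatorname{local}}=8/9$.

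I expect the convex-hull reduction and the explicit winning example to be entirely routine. The only genuine content is the short counting argument of the second paragraph — that two fixed-point-free maps on three colours cannot have disjoint images — and I would present that pigeonhole step as the heart of the proof.
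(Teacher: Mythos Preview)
Your proposal is correct and follows essentially the same approach as the paper: reduce to deterministic strategies, observe that fixed-point-free self-maps of a three-element set must have images of size at least two, and conclude by pigeonhole that the images of $f_A$ and $f_B$ cannot be disjoint, so at least one of the nine input pairs is lost; then exhibit an explicit pair of functions achieving $8/9$. The only minor difference is that you spell out the convexity reduction to deterministic strategies explicitly, whereas the paper simply invokes it as a well-known fact, and your witnessing functions differ from the paper's but work for the same reason.
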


\begin{proof}
Consider $f(R)=B$ and $f(G)=f(B)=R$ as well as $g(R)=g(B)=G$ and $g(G)=B$. By inspection of these functions we conclude $p^{\operatorname{win}}_{\operatorname{deterministic}}\geq 8/9$ because for all inputs $a,b$ we have $f(a)\neq a$ and $g(b)\neq b$ and 8 out of 9 input pairs $(a,b)$ are
such that $f(a)\neq g(b)$. 
Since it is a well known fact that $p^{\operatorname{win}}_{\operatorname{local}}=p^{\operatorname{win}}_{\operatorname{deterministic}}$, it suffices to show that $p^{\operatorname{win}}_{\operatorname{deterministic}} \leq 8/9$ as well.

To prove this, consider any pair of functions $f,g$. To obtain $f(a)\neq a$ for all $a$, the image of $f$ must contain at least 2 colours. Similarly for the image of $g$. Since both $f$ and $g$ can only take 3 values, their images
must have a common colour. Therefore, there exists an $a$ and a $b$ such that $f(a)=g(b)$. We conclude $p^{\operatorname{win}}_{\operatorname{deterministic}}\leq 8/9$, and therefore $p^{\operatorname{win}}_{\operatorname{local}}=p^{\operatorname{win}}_{\operatorname{deterministic}}=8/9$.
\end{proof}

Note: somewhat surprisingly Theorem \ref{RGRBPR} is not good enough to surpass $p^{\operatorname{win}}_{\operatorname{local}}$ in the quantum case. Since $\RGRB \in \mbox{LOC}^{2\!}({\bf PR})$ (and not in $\mbox{LOC}^{1\!}({\bf PR})$), an
optimal quantum approximation to a PR-box (known to succeed with probability $\frac{\sqrt{3}}{2}$) used instead of the perfect one only yields a $\frac{5}{2}-\sqrt{3} \approx 0.76795$ approximation to an RGB-box.

A natural question is therefore to find a quantum strategy that is better than the local one.


\newif\ifshowdiscussion
\showdiscussiontrue 
\newcommand{\discussion}[1]{\ifshowdiscussion\textcolor{red}{~{#1}}\fi}
\newcommand{\p}[4]{p_{(#1,#2|#3,#4)}} 
\newcommand{\pp}[4]{\frac{p_{(x\Equal#1|a\Equal#3)} + p_{(y\Equal #2|b\Equal #4)} + p_{(x\Equal y|#3,#4)}-1}{2} }
\newcommand{\px}[1]{p_{(x\Equal 0|#1)}}
\newcommand{\py}[1]{p_{(y\Equal 0|#1)}}
\newcommand{\pxy}[2]{p_{(x\Equal y|#1,#2)}}
\def\Equal{\texttt{=}}

\section{A better-than-local quantum strategy}\label{quantumbetter}
There is indeed a better-than-local quantum strategy that wins with probability $11/12$:

Alice and Bob share a singlet state $\ket{\psi^-}_{AB}$. According to their own input colour, they choose their measurement from the following list:
\begin{equation}
	\Pi_{\operatorname{Red}}=\ketbra{0},\Pi_{\operatorname{Green}}=\ketbra{v^-},\Pi_{\operatorname{Blue}}=\ketbra{v^+}
\end{equation}
where
\begin{equation}
	\ket{v^\pm}=\frac{1}{2}\ket{0}\pm \frac{\sqrt{3}}{2}\ket{1}
\end{equation}
These 3 projectors are located in the same plane equidistantly (like the Mercedes-Benz logo). The colour names can be permutated freely as long as Alice and Bob do the same projection for the same colour.

If the output of their measurement is positive, they output the colour that comes after their input colour in the cycle $RGB$. Otherwise, they output the previous colour. They never output their own input colour as it leads to a sure loss. 

For example, if Alice's input is Green and she measures a positive result when applying the projector $\Pi_{\operatorname{Green}}$, then $a=G$ and $x=G+1=B$ (the colour addition is modulo 3). Figure~\ref{quantumfigure} explains the protocol graphically.

\begin{figure}\centering
\%
\begin{tikzpicture}[scale=0.75]

\fill[red,opacity=0.3] (-0.2,-2.2) rectangle (0.2,2.2);
\begin{scope}[rotate around={60:(0,0)}]
\fill[green,opacity=0.3] (-0.2,-2.2) rectangle (0.2,2.2);
\end{scope}
\begin{scope}[rotate around={-60:(0,0)}]
\fill[blue,opacity=0.3] (-0.2,-2.2) rectangle (0.2,2.2);
\end{scope}

\draw[black,thick] (0,0) circle (2);

\draw[green,->,ultra thick] (0,0) -- (0,2);
\draw[blue,->,ultra thick] (0,0) -- (0,-2);

\draw[blue,->,ultra thick] (0,0) -- ({2*cos(-30)},{2*sin(-30)});
\draw[red,->,ultra thick] (0,0) -- ({-2*cos(-30)},{-2*sin(-30)});

\draw[red,->,ultra thick] (0,0) -- (-{2*cos(-30)},-{2*sin(30)});
\draw[green,->,ultra thick] (0,0) -- ({2*cos(-30)},{2*sin(30)});

\draw node[above] at (0,2.2) {$\ket{0}$};
\draw node[below] at (0,-2.2) {$\ket{1}$};
\draw node[right] at (2.2,0) {$\ket{+}$};
\draw node[left] at (-2.2,0) {$\ket{-}$};
\end{tikzpicture}
\caption{Alice and Bob's best quantum strategy is to each make the above projective measurement on their half-singlet. The basis (rectangle) depends on their own input colour. Their output is the colour of the measured arrow.}\label{quantumfigure}
\end{figure}
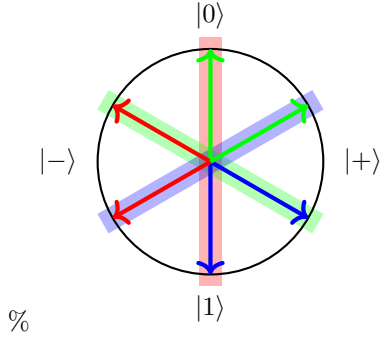
\subsection{Proof of winning probability}
We look at the probability of losing as it is simpler. To simplify notation, we call directly $x=a-1 \leftrightarrow x=0$ and $x=a+1 \leftrightarrow x=1$ as well as $y=b-1 \leftrightarrow y=0$ and $y=b+1 \leftrightarrow y=1$. Alice and Bob lose in the following cases:

\begin{equation}
	\begin{array}{ll}
		x=y  & \mbox{  if } b=a \\
		x=0\land y=1  & \mbox{  if } b=a+1 \mod3 \\
		x=1 \land y=0  & \mbox{  if } b=a-1 \mod3 \\
	\end{array}
\tag{losing cases}
\end{equation}

The probability of error $E$ only depends on the relation between $a$ and $b$ and is for each case:
\begin{align}
	E_{a=b} &=\tr \left(\ketbra{\psi^-}_{AB} \cdot \left( (\Pi_a \otimes \Pi_b) + (\Pi_a^\perp \otimes \Pi_b^\perp)\right)\right)=0\\
	E_{a+1=b}&=\tr \left(\ketbra{\psi^-}_{AB} \cdot (\Pi_a^\perp \otimes \Pi_b) \right)=\frac{1}{8}\\
	E_{a-1=b}&=\tr \left(\ketbra{\psi^-}_{AB} \cdot (\Pi_a \otimes \Pi_b^\perp) \right)=\frac{1}{8}
\end{align}
And the winning probability of this quantum strategy is (with uniformly random inputs):
\begin{equation}
p(\operatorname{win})= 1-\frac{3E_{a=b} +3 E_{a+1=b}+3E_{a-1=b}}{9} 	=\frac{11}{12}
\end{equation}

The game is therefore won with probability $11/12$ using this quantum strategy. \qedsymbol

\section{The Bell inequality associated to the RGB game}
The above quantum strategy is optimal among quantum strategies. To prove it in Section~\ref{quantumnotperfect}, we now analyze a Bell inequality associated to the RGB game. Bell game and Bell inequalities are equivalent formulations of the same phenomenon. We quickly recall how to translate from one paradigm to the other before defining the inequality and stating corresponding bounds for quantum and No-Signalling strategies.
\subsection{Bell game vs Bell inequality notations}

Up to now, we have analyzed the RGB game in the modern game context, meaning we treated strategies as probability distributions of the form $P_{x,y|a,b}$ and showed strategies in different locality classes (i.e. local, quantum or No-Signalling) can achieve different win rates. To finetune our analysis, we excluded without losing generality the output colour that always lose (i.e. $x=a$ and $y=b$) and treated the remaining outputs as binary (i.e. $0:=u-1$ and $1:=u+1$). In the next subsections, we will also use the notation $p_{(x,y|a,b)}$ for the individual conditional probabilities.

However, another way to see this problem is through Bell inequalities. Instead of looking at a game with binary outputs, one consider the properties of observables with values in $\{-1,1\}$. An observable is simply a physical quantity one can decide to measure. In physics Bell inequalities (e.g. the CHSH inequality) are usually specified by a function of the expected correlations of different observables. This function defines a quantity to which classical mechanics (i.e. local hidden variable models) imposes a limit that can be broken using quantum mechanics. We remark that all of Alice's observables need to commute (meaning the order in which they are measured don't affect their results) with all of Bob's observable to respect the No-Signalling condition common to $\mathbb{LOC}$, $|\mathbb{LOC}\rangle$ and $\mathbb{NOSIG}$.

The canonical example of a Bell inequality is the CHSH inequality. This Bell inequality also has a quantum limit: it is Tsirelson's bound. As we are about to see, this will also be the case of the RGB Bell-inequality.

The relevant point is that one can translate between the two formulations by expressing the conditional probabilities of the Bell game paradigm as expectancies of correlations in the Bell inequality paradigm, and \emph{vice versa}. We will in fact only need the following conversion equation:
\begin{align}
 \pxy{a}{b}= \frac{1+\left< A_a B_b \right>}{2} \label{corres}
\end{align}
where we noted $\left< A_a B_b \right>$ the expected correlation between the measurement outcomes of Alice's observable $A_a$ and Bob's observable $B_b$. 

\subsection{Intermediate step: rewriting the probability of winning as a function of the expected correlations between Alice's and Bob's outputs}

The following lemma will make the subsequent Bell inequality formulation simple.
\begin{lemma}\label{correlationsonly}
The probability of a given strategy distribution winning the game is given by :\begin{equation}
 p^{\operatorname{win}} = \frac{1}{9} \sum_{u=0}^2 2 - \pxy{u}{u}+\frac{\pxy{u}{u+1}}{2}+\frac{\pxy{u}{u-1}}{2}
\end{equation}
It depends only on the correlations between Alice's and Bob's output, not on their marginals.

\begin{proof}
By looking at the three losing cases above (see Section~\ref{quantumbetter}.), we obtain the probability of a distribution winning the game:
\begin{equation}
	p^{\operatorname{win}}=\frac{1}{9} \sum_{u=0}^2 \left( 3-\p{0}{0}{u}{u}- \p{1}{1}{u}{u}- \p{0}{1}{u}{u+1}-\p{1}{0}{u}{u-1}\right) \tag{winning probability equation}
\end{equation}
We rewrite it in terms of the marginals and correlations $\{\px{a},\py{b},\pxy{a}{b}\}$. Here is how we can transform each term:
\begin{align}
\p{0}{0}{a}{b}&= \frac{\px{a}+\py{b}+\pxy{a}{b}-1}{2} \\
\p{1}{1}{a}{b}&= \pxy{a}{b}- \p{0}{0}{a}{b}\\
\p{0}{1}{a}{b}&= \px{a}- \p{0}{0}{a}{b}\\
\p{1}{0}{a}{b}&= \py{b}-  \p{0}{0}{a}{b}
\end{align}
Replacing them into the winning probability equation gives:
\begin{align}
	p^{\operatorname{win}}=&\frac{1}{9} \sum_{u=0}^2  3-\p{0}{0}{u}{u}- \p{1}{1}{u}{u}- \p{0}{1}{u}{u+1}-\p{1}{0}{u}{u-1} \\
=&\frac{1}{9} \sum_{u=0}^{2} 3 - \p{0}{0}{u}{u} - \pxy{u}{u} + \p{0}{0}{u}{u} - \px{a\Equal u} + \p{0}{0}{u}{u+1} - \py{b\Equal u-1} + \p{0}{0}{u}{u-1}  \\
=&\frac{1}{9} \sum_{u=1}^2 3 - \pxy{u}{u} - \px{a\Equal u}+\pp{0}{0}{u}{u+1}\notag \\
&\hspace{0.2 \textwidth} - \py{b\Equal u} + \pp{0}{0}{u}{u-1} \\ 
 =& \frac{1}{9} \sum_{u=0}^2 2 - \pxy{u}{u}+\frac{\pxy{u}{u+1}}{2}+\frac{\pxy{u}{u-1}}{2}
\end{align}
\end{proof}
\end{lemma}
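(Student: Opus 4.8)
The plan is to reduce the entire winning probability to the agreement probabilities $\pxy{a}{b}$ by a single pass of substitution, and then to watch all marginal terms cancel. First I would write down the raw winning probability. With uniformly random inputs each of the nine pairs $(a,b)$ carries weight $\frac19$, and I would partition them according to $b-a\in\{0,+1,-1\}$, i.e.\ into the three families $(u,u)$, $(u,u+1)$, $(u,u-1)$ for $u\in\{0,1,2\}$. Reading off the losing cases of Section~\ref{quantumbetter} in the binary encoding ($x=0\leftrightarrow x=a-1$, $x=1\leftrightarrow x=a+1$, likewise for $y$ relative to $b$), the pair $(u,u)$ loses exactly when the two colours agree, i.e.\ with probability $\p{0}{0}{u}{u}+\p{1}{1}{u}{u}$; the pair $(u,u+1)$ loses exactly on $(x,y)=(0,1)$; and $(u,u-1)$ loses exactly on $(x,y)=(1,0)$. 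Summing the three per-family win probabilities produces the winning probability equation already displayed.

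Next I would pass from the four joint probabilities to the marginals $\px{a},\py{b}$ and the agreement probability $\pxy{a}{b}$. The only identity needed is that for fixed $(a,b)$ the four joints sum to $1$, which gives $\px{a}+\py{b}+\pxy{a}{b}=2\,\p{0}{0}{a}{b}+1$ and hence $\p{0}{0}{a}{b}=\frac{\px{a}+\py{b}+\pxy{a}{b}-1}{2}$; the remaining three joints then follow by subtraction (for instance $\p{1}{1}{a}{b}=\pxy{a}{b}-\p{0}{0}{a}{b}$, $\p{0}{1}{a}{b}=\px{a}-\p{0}{0}{a}{b}$, and $\p{1}{0}{a}{b}=\py{b}-\p{0}{0}{a}{b}$). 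I would substitute these four conversions term by term into the winning probability equation, noting that the $(u,u)$ contribution immediately collapses via $\p{0}{0}{u}{u}+\p{1}{1}{u}{u}=\pxy{u}{u}$.

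The heart of the argument, and the only place I expect real subtlety, is that all marginal contributions cancel, by two \emph{different} mechanisms that the bookkeeping must keep apart. For Alice, the marginal $\px{u}$ enters with coefficient $-1$ from the $(u,u+1)$ loss term and with coefficient $+\tfrac12$ from each of the two expansions of the relevant $\p{0}{0}{\cdot}{\cdot}$, for a net coefficient $-1+\tfrac12+\tfrac12=0$; thus Alice's marginals die term by term, independently for each $u$. For Bob there is no such per-term cancellation: what survives for each $u$ is the combination $\tfrac12\big(\py{u+1}-\py{u-1}\big)$, and this vanishes only after summing cyclically over $u\in\{0,1,2\}$, where the three Bob marginals cancel telescopically around the $\mathbb{Z}_3$ cycle. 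Once both families of marginals are gone, the constant collapses to $2$ for each $u$ and the only survivors are the correlation terms $-\pxy{u}{u}+\tfrac12\pxy{u}{u+1}+\tfrac12\pxy{u}{u-1}$, which is exactly the claimed formula; the fact that every $\px{}$ and $\py{}$ has disappeared is precisely the stated independence from the marginals.
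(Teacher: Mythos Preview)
Your proposal is correct and follows essentially the same route as the paper: start from the losing-cases expression for $p^{\operatorname{win}}$, convert each joint probability via the identity $\p{0}{0}{a}{b}=\tfrac12(\px{a}+\py{b}+\pxy{a}{b}-1)$ and its consequences, and simplify. Your explicit observation that Alice's marginals cancel per summand while Bob's cancel only telescopically over the $\mathbb{Z}_3$ cycle is a nice bit of commentary that the paper's purely algebraic chain leaves implicit, but it does not constitute a different approach.
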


\subsection{The RGB Bell-inequality}
We show a new simple case of a Bell inequality which we call the RGB Bell-inequality.
We define it by reformulating the bound on the local winning probability of the RGB game.
\begin{proposition}
The following quantity is related to the RGB game:
 \begin{equation}
	R:= \left|\sum_{i=0}^{2} - 2\left<A_iB_i \right> +\left<A_iB_{i+1} \right>+ \left<A_iB_{i-1} \right>\right|\tag{RGB Bell-quantity}
\end{equation}
and allows us to express the RGB Bell-inequality:
\begin{equation}
	R_{\operatorname{local}}  \le 8\tag{RGB Bell-inequality}
\end{equation}

\begin{proof}
 We first rewrite the equation describing the probability of winning the RGB game into a Bell inequality notation by taking Lemma~\ref{correlationsonly} and making the simple substitution given in Eq.~\ref{corres}. We obtain:
\begin{equation}
	p^{\operatorname{win}}=\frac{1}{36} \sum_{i=0}^{2} 8 - 2\left<A_iB_i \right> +\left<A_iB_{i+1} \right>+ \left<A_iB_{i-1} \right>
\end{equation}

We then define the interesting part as the RGB Bell-inequality:
\begin{equation}
	R:=36\cdot p^{\operatorname{win}} -24=	 \left|\sum_{i=0}^{2} - 2\left<A_iB_i \right> +\left<A_iB_{i+1} \right>+ \left<A_iB_{i-1} \right>\right|
\end{equation}

Finally, from Theorem \ref{RGB89} we have $p_{\operatorname{local}}^{\operatorname{win}}\le \frac{8}{9}$, which by the last equation implies $R_{\operatorname{local}} \le 8$.

\end{proof}
\end{proposition}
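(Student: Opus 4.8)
The plan is to translate the probability-of-winning formula from Lemma~\ref{correlationsonly} into the language of expected correlations via the single substitution~(\ref{corres}), and then read off the local bound directly from Theorem~\ref{RGB89}. The whole argument is an affine change of variables followed by one symmetry observation.

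First I would start from the expression
\[
p^{\operatorname{win}} = \frac{1}{9} \sum_{u=0}^2 \left( 2 - \pxy{u}{u} + \frac{\pxy{u}{u+1}}{2} + \frac{\pxy{u}{u-1}}{2} \right)
\]
supplied by Lemma~\ref{correlationsonly}, and replace every same-output probability using $\pxy{a}{b} = \tfrac{1}{2}\left(1 + \left< A_a B_b \right>\right)$ from Eq.~(\ref{corres}). Because this substitution is affine, each summand splits cleanly into a constant piece and a linear combination of the three correlations $\left< A_u B_u \right>$, $\left< A_u B_{u+1} \right>$, and $\left< A_u B_{u-1} \right>$.

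Second, I would collect the constants and clear denominators. Factoring $\tfrac14$ out of each summand turns the prefactor $\tfrac19$ into $\tfrac1{36}$, promotes the constant to $8$, and scales the correlation coefficients to $-2$ on the diagonal and $+1$ on each off-diagonal term, giving
\[
p^{\operatorname{win}} = \frac{1}{36} \sum_{i=0}^{2} \left( 8 - 2\left< A_i B_i \right> + \left< A_i B_{i+1} \right> + \left< A_i B_{i-1} \right> \right).
\]
The non-constant part is exactly the RGB Bell-quantity, so $R = 36\, p^{\operatorname{win}} - 24$ (up to sign), and Theorem~\ref{RGB89} then finishes the upper direction: $p^{\operatorname{win}}_{\operatorname{local}} \le \tfrac89$ yields $36 \cdot \tfrac89 - 24 = 8$.

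The one step that genuinely needs care is the absolute value in the definition of $R$. The computation above bounds only the \emph{signed} quantity $36\, p^{\operatorname{win}} - 24$ from above, whereas $R$ is its modulus, so a matching lower bound $36\, p^{\operatorname{win}} - 24 \ge -8$ is still required. I would obtain this from symmetry rather than a separate winning-probability estimate: relabelling Bob's binary output, equivalently negating each observable $B_b \mapsto -B_b$, flips the sign of every correlation and hence of the entire Bell-quantity while keeping the strategy inside the local class. The set of achievable signed values is therefore symmetric about zero, so its maximum modulus equals its maximum value, namely $8$, and $R_{\operatorname{local}} \le 8$ follows. This symmetry remark is the only idea beyond routine affine bookkeeping, and I expect it to be the subtle point that the bare appeal to Theorem~\ref{RGB89} glosses over.
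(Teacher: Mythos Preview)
Your argument is correct and follows the paper's proof essentially step for step: substitute Eq.~(\ref{corres}) into Lemma~\ref{correlationsonly}, clear denominators to obtain $p^{\operatorname{win}}=\tfrac{1}{36}\sum_i\bigl(8-2\langle A_iB_i\rangle+\langle A_iB_{i+1}\rangle+\langle A_iB_{i-1}\rangle\bigr)$, and invoke Theorem~\ref{RGB89}. Your additional symmetry remark handling the absolute value (via $B_b\mapsto -B_b$) is a genuine and correct refinement; the paper's own proof simply writes $R:=36\,p^{\operatorname{win}}-24$ and appeals to $p^{\operatorname{win}}_{\operatorname{local}}\le\tfrac89$, tacitly treating only the signed quantity, so your version is in fact the more complete one.
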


As we showed in Section~\ref{quantumbetter}, quantum mechanics allows to do better than classical local strategies, but we will soon show that there is also a limit to how good quantum strategies can be. In fact, the quantum strategy we described earlier is optimal.

\begin{theorem}
The RGB Bell-inequality can be broken by quantum distributions, but there exists for the RGB game an analogue to Tsirelson's bound.
\begin{equation}
R_{\operatorname{quantum}} \le 9 \tag{quantum bound}
\end{equation}
The inequality is tight.
\begin{proof}
The value $R_{\operatorname{quantum}}=9$ is possible. It follows directly from the quantum strategy achieving a win rate of $\frac{11}{12}$, as described in Section~\ref{quantumbetter}. The proof one cannot do better is shown next in Section~\ref{quantumnotperfect}.
\end{proof}
\end{theorem}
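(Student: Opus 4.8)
The plan is to prove the statement in two parts: tightness (the value $R_{\operatorname{quantum}}=9$ is attained) and the bound itself ($R_{\operatorname{quantum}}\le 9$). The tightness half is immediate and is exactly what the quoted proof records: the explicit qubit strategy of Section~\ref{quantumbetter} wins with probability $\frac{11}{12}$, and since $R=36\,p^{\operatorname{win}}-24$ this gives $R=36\cdot\frac{11}{12}-24=9$. So all the content lies in the upper bound, the analogue of Tsirelson's bound for this inequality.

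To obtain the upper bound I would first use Lemma~\ref{correlationsonly}, which guarantees that the figure of merit depends only on the correlations $\langle A_iB_j\rangle$ and not on the marginals; this makes the RGB Bell-inequality a pure-correlation (XOR-game-type) inequality. I can therefore invoke Tsirelson's characterisation of such inequalities: for $\pm1$-valued observables on a shared state, the correlation matrix is precisely the set of Gram matrices $\langle A_iB_j\rangle=\mathbf a_i\cdot\mathbf b_j$ for unit vectors $\mathbf a_i,\mathbf b_j$ in a real inner-product space. Setting $\mathbf c_i:=-2\mathbf b_i+\mathbf b_{i+1}+\mathbf b_{i-1}$, the Bell quantity is $R=\bigl|\sum_{i=0}^2\mathbf a_i\cdot\mathbf c_i\bigr|$, and since Alice's vectors are unrestricted I can align each $\mathbf a_i$ with $\mathbf c_i$. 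This collapses the whole problem to the geometric maximisation $R_{\operatorname{quantum}}=\max\sum_{i=0}^2\|\mathbf c_i\|$ over unit vectors $\mathbf b_0,\mathbf b_1,\mathbf b_2$.

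From there I would finish analytically rather than numerically. Expanding with $\|\mathbf b_i\|=1$ gives $\sum_i\|\mathbf c_i\|^2=18-6(\mathbf b_0\cdot\mathbf b_1+\mathbf b_1\cdot\mathbf b_2+\mathbf b_2\cdot\mathbf b_0)$, and the elementary identity $\mathbf b_0\cdot\mathbf b_1+\mathbf b_1\cdot\mathbf b_2+\mathbf b_2\cdot\mathbf b_0=\tfrac12(\|\mathbf b_0+\mathbf b_1+\mathbf b_2\|^2-3)\ge-\tfrac32$ bounds that sum of squares by $27$. Cauchy--Schwarz then gives $\sum_i\|\mathbf c_i\|\le\sqrt{3\sum_i\|\mathbf c_i\|^2}\le\sqrt{81}=9$. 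Both inequalities saturate simultaneously precisely when $\mathbf b_0+\mathbf b_1+\mathbf b_2=\mathbf 0$ with the three directions $120^\circ$ apart, i.e. the Mercedes-Benz configuration of Section~\ref{quantumbetter}; this re-derives both $R_{\operatorname{quantum}}\le 9$ and its tightness and identifies the optimal strategy.

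The step I expect to be the real obstacle is making the Tsirelson reduction watertight: one must confirm the inequality is genuinely marginal-free (this is exactly what Lemma~\ref{correlationsonly} provides) and that maximising the signed correlation functional over arbitrary unit vectors is a valid upper bound for every quantum strategy, in any dimension, rather than only for the singlet strategy already in hand. Once that reduction is secured the geometry is routine. Should the vector argument resist a fully rigorous write-up, the fallback --- and the route announced in Section~\ref{quantumnotperfect} --- is to certify $9I-\mathcal R\succeq0$ for the Bell operator $\mathcal R=\sum_i(-2A_i\otimes B_i+A_i\otimes B_{i+1}+A_i\otimes B_{i-1})$ by exhibiting a dual semidefinite-programming solution, equivalently a sum-of-squares decomposition, which yields the same bound $9$.
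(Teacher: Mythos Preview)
Your proposal is correct and, after the common first step, takes a genuinely different route from the paper. Both arguments begin with the same Tsirelson reduction: because Lemma~\ref{correlationsonly} makes the functional marginal-free, one may replace $\langle A_iB_j\rangle$ by inner products $\mathbf a_i\cdot\mathbf b_j$ of real unit vectors. From there the paper proceeds via semidefinite programming: it rewrites the problem as $\max\frac12\operatorname{tr}GW$ over Gram matrices $G\succeq0$ with unit diagonal, and certifies optimality by exhibiting the dual-feasible witness $\Lambda'=\tfrac32 I_6$, whose dual value equals the primal value $9$. You instead finish analytically: after optimising out Alice's vectors you reduce to $\max\sum_i\|\mathbf c_i\|$ with $\mathbf c_i=-2\mathbf b_i+\mathbf b_{i+1}+\mathbf b_{i-1}$, compute $\sum_i\|\mathbf c_i\|^2=18-6\sum_{i<j}\mathbf b_i\cdot\mathbf b_j\le27$ via $\|\mathbf b_0+\mathbf b_1+\mathbf b_2\|^2\ge0$, and conclude with Cauchy--Schwarz. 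Your argument is more elementary and self-contained, and it transparently identifies the extremiser (the Mercedes-Benz configuration) from the equality conditions; the paper's SDP approach is more mechanical but generalises readily to other Bell functionals where no such clean geometric identity is available. Your final paragraph even names the paper's method as the fallback, so you have anticipated both proofs.
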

While quantum strategies cannot reach the trivial upper bound, No-Signalling strategies can.
\begin{proposition}
No-Signalling physics (i.e. the use of the $\mathbf{R  \! \frac{GR}{BG} \! B}$-box) could break maximally the RGB Bell-inequality.
\begin{equation}
R_{\operatorname{No-Signalling}} \le 12 \tag{trivial No-Signalling bound}
\end{equation}
The inequality is tight.
\begin{proof}
The value $R_{\operatorname{No-Signalling}}=12$ is possible by using the No-Signalling strategy described in Section~1. because it achieves a win rate of $1$. The inequality is tight as all expected correlation terms are here bounded by $\{-1,1\}$.
\end{proof}
\end{proposition}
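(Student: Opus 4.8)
The plan is to prove the two assertions separately: the upper bound $R_{\operatorname{No-Signalling}} \le 12$, which is purely combinatorial and in fact holds for \emph{every} strategy whatsoever, and the tightness, which I would witness with the $\RGRB$-box.

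First I would establish the bound directly from the definition of the RGB Bell-quantity. Applying the triangle inequality across the three summands indexed by $i$, and then bounding each correlation of $\{-1,1\}$-valued observables by $\left|\left<A_iB_j\right>\right|\le 1$, I obtain
\begin{align}
R &= \left|\sum_{i=0}^{2} -2\left<A_iB_i\right> + \left<A_iB_{i+1}\right> + \left<A_iB_{i-1}\right>\right| \notag \\
&\le \sum_{i=0}^{2}\left(2\left|\left<A_iB_i\right>\right| + \left|\left<A_iB_{i+1}\right>\right| + \left|\left<A_iB_{i-1}\right>\right|\right) \le \sum_{i=0}^{2}(2+1+1) = 12.
\end{align}
I would emphasise that this step invokes no physical hypothesis at all---neither locality, nor a quantum representation, nor even No-Signalling---so the same $12$ bounds any conceivable strategy; this is precisely why it is labelled ``trivial''. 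The real content of the proposition is therefore the reverse direction: that the No-Signalling class is already rich enough to meet this ceiling.

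Second I would prove tightness by exhibiting an explicit saturating distribution, namely the $\RGRB$-box, which was shown earlier to lie in $\mathbb{NOSIG}$ and to win the RGB game with probability $1$. The quickest route is to reuse the identity $R = |36\,p^{\operatorname{win}} - 24|$ from the RGB Bell-inequality proposition: setting $p^{\operatorname{win}} = 1$ gives $R = |36 - 24| = 12$ at once. For a more transparent check I would instead read off the nine correlations of $\RGRB$ through Eq.~(\ref{corres}): perfect play with equal inputs forces $x \neq y$, so $\pxy{u}{u} = 0$ and hence $\left<A_uB_u\right> = -1$, while the off-diagonal behaviour of $\RGRB$ forces $x = y$ (in the binary encoding) whenever $a \neq b$, so $\pxy{u}{u+1} = \pxy{u}{u-1} = 1$ and hence $\left<A_uB_{u+1}\right> = \left<A_uB_{u-1}\right> = +1$. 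Each triple then contributes $-2(-1)+1+1 = 4$, and summing over $i=0,1,2$ yields exactly $12$.

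There is no deep obstacle here; the computation is mechanical, and the only point requiring care is confirming that a single No-Signalling distribution drives all nine correlation terms to their extremes simultaneously---perfectly anti-aligned on the diagonal and perfectly aligned off it---which the $p^{\operatorname{win}}=1$ identity settles in one line. I would close by contrasting the three ceilings $R_{\operatorname{local}} \le 8 < R_{\operatorname{quantum}} \le 9 < R_{\operatorname{No-Signalling}} \le 12$, underscoring that the correlation configuration saturating $12$ is realisable only super-quantumly.
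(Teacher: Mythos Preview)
Your proof is correct and follows the same approach as the paper's: the upper bound comes from the trivial $|\langle A_iB_j\rangle|\le 1$ on each of the twelve weighted correlation terms, and tightness is witnessed by the $\RGRB$-box via the identity $R=|36\,p^{\operatorname{win}}-24|$ with $p^{\operatorname{win}}=1$. The paper's own proof is just a terser version of exactly this; your additional direct verification of the nine correlations is a welcome sanity check but not a different method.
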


\section{Tsirelson's-like bound and proof of optimality of the quantum strategy}\label{quantumnotperfect}
We now prove the optimality of the quantum strategy described in Section~\ref{quantumbetter}. by finding a Tsirelson's-like bound for the RGB Bell-inequality.
\subsection{The optimization problem}

We want to prove that for any $\ket{\psi}$, any $\{A_a\}$ and any $\{B_b\}$, the quantum limit for the RGB Bell-inequality holds:

\begin{equation}
R_{\operatorname{quantum}}=	\left|\sum_{u=0}^{2} - 2\left<A_uB_u \right> +\left<A_uB_{u+1} \right>+ \left<A_uB_{u-1} \right>\right| \stackrel{}{\le} {9}  \tag{quantum bound}
\end{equation}
We call the value associated to our known quantum strategy $R'=9$ and the optimal value $R^*$.

\subsection{Solving the Bell inequality using semidefinite programming}
We closely follow the semidefinite programming technique explained in this article~\cite{wehner2006tsirelson}. The idea is first to transform the Bell inequality problem from the quantum realm to the vector space using a result by Tsirelson. Then we use semidefinite programming with Lagrangian duality. The key point is that the Lagrangian dual problem upper-bounds the primal problem. So by guessing a solution to the dual problem which have the same value as $R'$, we prove that $R'$ is optimal.  

\subsection{A Bell inequality as a real vector problem}
We will use an important theorem by Tsirelson\footnote{We write it as formulated in \cite{wehner2006tsirelson}, but fix a small mistake in the quantifiers order (it was correct in the original paper).}~\cite{tsirelson1980}.

\begin{theorem}[Tsirelson]
Let $A_1,\dots,A_n$ and $B_1,\dots,B_n$ be observables with eigenvalues in the interval $\{-1,1\}$.
Then for any state $\ket{\psi}\in\mathcal{A}\otimes \mathcal{B}$, there exist real unit vectors $\vec{x}_1,\dots,\vec{x}_n,\vec{y}_1,\dots,\vec{y}_n \in \mathbb{R}^{2n}$ such that for all $s,t \in \{1,\dots,n\}$:
\begin{equation}
	\bra{\psi} A_s \otimes B_t \ket{\psi} = \vec{x}_s \cdot \vec{y}_t
\end{equation}
\end{theorem}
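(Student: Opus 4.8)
The plan is to construct the required real vectors directly from the action of the observables on $\ket{\psi}$, and then to exploit the fact that only $2n$ vectors are needed in order to force the ambient dimension down to $2n$. Throughout I take the hypothesis to mean the observables are $\pm 1$-valued, so that each is Hermitian with $A_s^2 = I$ and $B_t^2 = I$ (if the eigenvalues merely lie in the interval $[-1,1]$ one first dilates each observable to a genuinely $\pm 1$-valued one on an enlarged space, leaving every expectation $\bra{\psi} A_s \otimes B_t \ket{\psi}$ unchanged, which reduces to this case). First I would define the complex vectors $u_s := (A_s \otimes I)\ket{\psi}$ and $w_t := (I \otimes B_t)\ket{\psi}$ in $\mathcal{A} \otimes \mathcal{B}$. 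Using $A_s^\dagger = A_s$ one computes $\langle u_s, w_t \rangle = \bra{\psi}(A_s \otimes I)(I \otimes B_t)\ket{\psi} = \bra{\psi} A_s \otimes B_t \ket{\psi}$, a real number since $A_s \otimes B_t$ is Hermitian; moreover $\|u_s\|^2 = \bra{\psi} A_s^2 \otimes I \ket{\psi} = 1$ and likewise $\|w_t\| = 1$, so the vectors are already of unit norm.

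Next I would pass from complex to real vectors by realification. Writing each $v \in \mathbb{C}^d$, with $d = \dim(\mathcal{A} \otimes \mathcal{B})$, as $v = p + iq$ for $p,q \in \mathbb{R}^d$ and setting $\rho(v) := (p,q) \in \mathbb{R}^{2d}$, a one-line check gives $\rho(v) \cdot \rho(v') = \operatorname{Re}\langle v, v'\rangle$ and $\|\rho(v)\| = \|v\|$. Applying $\rho$ to the $u_s$ and $w_t$ produces $2n$ real unit vectors in $\mathbb{R}^{2d}$ whose cross inner products satisfy $\rho(u_s) \cdot \rho(w_t) = \operatorname{Re}\,\bra{\psi} A_s \otimes B_t \ket{\psi} = \bra{\psi} A_s \otimes B_t \ket{\psi}$, the final equality holding because that expectation is already real.

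The remaining, and really the only non-cosmetic, step is to cut the ambient dimension from $2d$ (which depends on the size of the Hilbert space) down to the claimed $2n$. Here I would note that the $2n$ vectors $\rho(u_1),\dots,\rho(u_n),\rho(w_1),\dots,\rho(w_n)$ span a real subspace $V \subseteq \mathbb{R}^{2d}$ of dimension at most $2n$; choosing an orthonormal basis of $V$ and replacing each vector by its coordinate tuple in that basis (padding with zeros to length $2n$) preserves all norms and inner products, since the basis is orthonormal. Setting $\vec{x}_s$ and $\vec{y}_t$ to be these coordinate vectors yields real unit vectors in $\mathbb{R}^{2n}$ with $\vec{x}_s \cdot \vec{y}_t = \bra{\psi} A_s \otimes B_t \ket{\psi}$ for all $s,t$, as required. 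The main obstacle is exactly this dimension count; the construction of the vectors is otherwise forced, and the only subtlety to monitor is that the $\pm 1$-valued hypothesis ($A_s^2 = B_t^2 = I$) is precisely what makes the vectors unit and the inner product real, so both features of the statement rest on that single assumption.
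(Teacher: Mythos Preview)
Your argument is correct and is in fact the standard proof of Tsirelson's vectorization theorem: form the vectors $(A_s\otimes I)\ket{\psi}$ and $(I\otimes B_t)\ket{\psi}$, realify to turn the already-real inner products into Euclidean dot products, and then compress to the span of the $2n$ vectors. The handling of the $\pm1$-valued hypothesis (and the aside on dilating when eigenvalues merely lie in $[-1,1]$) is the right way to secure both unit norm and reality of the inner products.

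There is nothing to compare against in the paper, however: the authors do not prove this theorem. They quote it as a known tool, citing Tsirelson's original paper and Wehner's reformulation, and immediately apply it to rewrite the RGB Bell quantity as a real-vector optimization. So your write-up goes strictly beyond what the paper itself supplies for this statement.
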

Applying it to our case, we reduce our Bell inequality problem to maximizing the following real-vectorial expression:

\begin{equation}
R=	\sum_{i=0}^{2} - 2\vec{x}_i\cdot\vec{y}_i +\vec{x}_i\cdot\vec{y}_{i+1}+ \vec{x}_i\cdot \vec{y}_{i-1}\label{realvec} 
\end{equation}
under the constraints $\forall i, ||\vec{x}_i||=||\vec{y}_i||=1$.

\subsection{The primal problem}
We re-write the last statements in a matrix form.

\begin{equation}
	G=\begin{pmatrix}\vec{x}_1\\\vec{x}_2\\\vec{x}_3\\\vec{y}_1\\\vec{y}_2\\\vec{y}_3 \end{pmatrix}\cdot \begin{pmatrix}\vec{x}_1&\vec{x}_2&\vec{x}_3&\vec{y}_1&\vec{y}_2&\vec{y}_3\end{pmatrix} 
\end{equation}
We note $G$ can have this form if and only if it is semidefinite positive and that its diagonal elements are equal to 1 because of the normalization constraints. We also define the matrix $W$ in a way that
$\frac{1}{2} \tr GW = R_G$ where $R_G$ is the $R$ defined in Eq.~\ref{realvec} associated to this strategy $G$.
\begin{equation}
	W=\left(
\begin{array}{rrrrrr}
 0 & 0 & 0 & -2 & 1 & 1 \\
 0 & 0 & 0 & 1 & -2 & 1 \\
 0 & 0 & 0 & 1 & 1 & -2 \\
 -2 & 1 & 1 & 0 & 0 & 0 \\
 1 & -2 & 1 & 0 & 0 & 0 \\
 1 & 1 & -2 & 0 & 0 & 0 \\
\end{array}
\right)
\end{equation}

Then the semidefinite optimization primal problem is:
\begin{equation}
	\operatorname{maximize} \frac{1}{2} \tr GW \mbox{~~~subject to~~~} G\ge0 \mbox{~and~} \forall i,g_{ii}=1\tag{primal problem}
\end{equation}

\subsubsection{The primal solution}
The quantum strategy we found previously is associated with the value $R'=9$. For the sake of completeness, we prove again here this value is achievable. 
\begin{equation}
	G'=\left(
\begin{array}{rrrrrr}
 1 & -\frac{1}{2} & -\frac{1}{2} & -1 & \frac{1}{2} & \frac{1}{2} \\
 -\frac{1}{2} & 1 & -\frac{1}{2} & \frac{1}{2} & -1 & \frac{1}{2} \\
 -\frac{1}{2} & -\frac{1}{2} & 1 & \frac{1}{2} & \frac{1}{2} & -1 \\
 -1 & \frac{1}{2} & \frac{1}{2} & 1 & -\frac{1}{2} & -\frac{1}{2} \\
 \frac{1}{2} & -1 & \frac{1}{2} & -\frac{1}{2} & 1 & -\frac{1}{2} \\
 \frac{1}{2} & \frac{1}{2} & -1 & -\frac{1}{2} & -\frac{1}{2} & 1 \\
\end{array}
\right)\tag{primal solution}
\end{equation}
We check that $G'\ge0$ by looking at its eigenvalues: they are indeed $\{3,3,0,0,0,0\}$. $G'$ is therefore a feasible solution whose primal value is $9$.

\subsection{The dual problem}
We now turn to the dual problem with Lagrange multipliers.
The idea is to pose an objective function $\mathcal{L}(G,\Lambda)$ which will be equal to $R_G$ if $G$ is a feasible solution (i.e. G is semidefinite positive and all the normalization constraints are satisfied) and whose dual can be evaluated in a non-trivial way.
\begin{equation}
\mathcal{L}(G,\Lambda)=\frac{1}{2} \tr GW - \tr \Lambda (G -I_6) \tag{objective function}
\end{equation}
where $\Lambda$ is the diagonal matrix of Lagrange multipliers $\{\lambda_1,\dots,\lambda_6 \}$.
Note that $\mathcal{L}(G,\Lambda)=R_G$ for a valid solution because when the constraints are satistifed: $G-I_6=\hat{0}$.

We can associate a dual function to the objective function,
\begin{equation}
	\lambda(\Lambda)=\max_{G\text{~is~feasible}}\mathcal{L}(G,\Lambda)=\max_{G\text{~is~feasible}} \tr G(\frac{1}{2}W-\Lambda)+ \tr\Lambda \tag{dual function}
\end{equation}

The crucial fact about this dual function $\lambda(\Lambda)$ is that it upperbounds $\mathcal{L}(G,\Lambda)$, so for any feasible quantum strategy it also upperbounds $R_G$ (and therefore $R^*$). This is because \cite{convex}:

\begin{equation}
	\lambda(\Lambda)=\max_{G\text{~is~feasible}} \mathcal{L}(G,\Lambda)\ge \mathcal{L}(G^*,\Lambda)=\mathcal{L}(G^*) = R^*
\end{equation}

We therefore simply exhibit one matrix $\Lambda$ such that this upper bound $\lambda(\Lambda)$ is 9. Since we can reach it, then it will be tight.

We observe that $\lambda(\Lambda)$ evaluates to infinity if $-\frac{1}{2}W+\Lambda \not\ge0$, and that otherwise, the $G$ maximizing $\mathcal{L}(G,\Lambda)$ is the null matrix.
This leads to the following dual problem:

\begin{equation}
	\operatorname{minimize} \tr\Lambda \mbox{~~~subject to~~~} -\frac{1}{2}W+\Lambda \ge 0 \tag{dual problem}
\end{equation}

We try this solution:
\begin{equation}
	\Lambda'= \frac{3}{2} I_6\tag{dual solution}
\end{equation}

The eigenvalues of $-\frac{1}{2}W + \Lambda'$ are $\{3,3,\frac{3}{2},\frac{3}{2},0,0\}$, confirming it is semidefinite positive and thus a feasible solution (it does not lead to the trivial bound). The associated dual value is $9$ and confirms the optimality of our quantum solution.

%
%

\section{Conclusion and Open Questions}
We have defined a new game, the RGB Game, that is very simple and there exists a No-Signalling strategy winning it with probability one.
In the sense we have defined, this strategy is equivalent to the winning strategy to the PR game.
We showed the RGB game can be better won with quantum resources than classical ones but that it is still not enough to win with certainty. This is possible with No-Signalling resources.
\begin{align}
&p^{\operatorname{win}}_{\operatorname{local}}\le \frac{8}{9}&& \text{and} &\left|	\sum_{i=0}^{2} - 2\left<A_iB_i \right> +\left<A_iB_{i+1} \right>+ \left<A_iB_{i-1} \right>\right| \le 8\tag{classical local limit}\\ 
&p^{\operatorname{win}}_{\operatorname{quantum}}\le \frac{11}{12}&&\text{and}	&\left|\sum_{i=0}^{2} - 2\left<A_iB_i \right> +\left<A_iB_{i+1} \right>+ \left<A_iB_{i-1} \right>\right| \le 9\tag{quantum limit}\\ 
&p^{\operatorname{win}}_{\operatorname{No-Signalling}}\le 1&&\text{and}	&\left|\sum_{i=0}^{2} - 2\left<A_iB_i \right> +\left<A_iB_{i+1} \right>+ \left<A_iB_{i-1} \right>\right| \le 12\tag{trivial NS limit} 
\end{align}
Our main open question is whether there exist a ${\mathbb{| LOC \rangle }}$-complete distribution. Another one is to generalize all this work to distributions involving more than two parties.

\section*{Acknowledgements}
We thank
G.~
Brassard,
S.~
Fehr,
A.~
Hansen,
J.~
Li,
A.~
Magra,
A.~
Massenet,
A.~
Montina,
L.~
Salvail,
C.~
Schaffner,
S.~
Wolf
and
N.~
Yang
for discussions about early versions of this work.


\end{document}